\definecolor{amazing}{RGB}{254,67,101}
\newcommand{\todo}[1]{\fxfatal{\color{red}#1}}
\newcommand\mathindex[1]{\index[sym]{\ensuremath{{#1}}}}
\newcommand\mathindex*[1]{\index*[sym]{\ensuremath{{#1}}}}
\newcommand{\raisemath}[1]{\mathpalette{\raisem@th{#1}}}
\newcommand{\raisem@th}[3]{\raisebox{#1}{$#2#3$}}
\newcommand\Problem[2][]{%
    \index[prob]{#2@\textsc{#2}}\textsc{#2}\xspace%
}
\newclass{\paraNP}{paraNP}
\newcommand\restr[2]{{
  \left.\kern-\nulldelimiterspace 
  #1 
  \vphantom{\big|} 
  \right|_{#2} 
  }}
\def\sminor^#1{
    \preccurlyeq_{\mathrlap{\mathsf{m}}}^{#1}
} 
\def\ssminor^#1{%
    \mathbin{\dot\preccurlyeq_{{\mathrlap{\mathsf{m}}}}^{#1}}\,
} 
\def\stminor^#1{
    \preccurlyeq_{\mathrlap{\mathsf{t}}}^{#1}
} 
\def\sstminor^#1{%
    \mathbin{\dot\preccurlyeq_{{\mathrlap{\mathsf{t}}}}^{#1}}\,
} 
\def\grad_#1{\nabla\!_{#1}}
\def\sgrad_#1{{\dot\nabla}\!_{#1}}
\def\topgrad_#1{\widetilde \nabla\!_{#1}}
\def\stopgrad_#1{\dot{\widetilde\nabla}\!_{#1}}
\def\topomega_#1{\widetilde \omega_{#1}}
\def\colnum_#1{ \operatorname{col}_{#1} }
\def\wcolnum_#1{ \operatorname{wcol}_{#1} }
\def\adm_#1{ \operatorname{adm}_{#1} }
\newcommand{\bnd}{\ensuremath{{\partial}}}    
\newcommand{\tbnd}{%
  \mathchoice%
  {\raisebox{4.5pt}{{\scriptsize$\circ$}}\mkern-2mu} 
  {\raisebox{4.5pt}{{\scriptsize$\circ$}}\mkern-2mu} 
  {\raisebox{4.1pt}{{\tiny$\circ$}}\mkern-3mu} 
  {\raisebox{4.5pt}{{\scriptsize$\circ$}}\mkern-2mu} 
}
\renewcommand{\leq}{\leqslant}
\renewcommand{\geq}{\geqslant}
\renewcommand{\epsilon}{\varepsilon}
\renewcommand{\emptyset}{\varnothing}
\newlength{\convarrowwidth}
\newcommand{\widthm}[1]{ \mathbf{#1} } 
\DeclareMathOperator{\width}{ \widthm{width} } 
\def\YYYY{{Y_0 \uplus Y_1 \uplus \cdots \uplus Y_\ell}} 
\def\YYYY'{{Y'_0 \uplus Y'_1 \uplus \cdots \uplus Y'_{\ell'}}}
\def\ds{\mathop{\mathbf{ds}}}
\newcommand{\yaay}{\kern4pt \ding{51} \kern-8pt \ding{51}}%
\DeclarePairedDelimiter\ceil{\lceil}{\rceil}
\newtheorem{prop}[theorem]{Proposition}
\newtheorem{observation}[theorem]{Observation}
\newtheorem*{claim}{Claim}
\renewcommand*\etal{\xperiodafter{\emph{et~al}}}
\newcommand*\varrule[1][0.4pt]{\leavevmode\leaders\hrule height#1\hfill\kern0pt}
\setlist[1]{labelindent=\parindent,leftmargin=*} 
\setlist{itemsep=0pt}
\newenvironment{tightcenter}
 {\parskip=0pt\par\nopagebreak\centering}
 {\par\noindent\ignorespacesafterend}
\let\st\relax
\newlength{\RoundedBoxWidth}
\newsavebox{\GrayRoundedBox}
\newenvironment{GrayBox}[1]%
   {\setlength{\RoundedBoxWidth}{\textwidth-4.5ex}
    \def\boxheading{#1}
    \begin{lrbox}{\GrayRoundedBox}
       \begin{minipage}{\RoundedBoxWidth}%
   }{%
       \end{minipage}
    \end{lrbox}%
    \begin{tightcenter}%
    \begin{tikzpicture}%
       \node(Text)[draw=black!20,fill=white,rounded corners,%
             inner sep=2ex,text width=\RoundedBoxWidth]%
             {\usebox{\GrayRoundedBox}};
        \coordinate(x) at (current bounding box.north west);
        \node [draw=white,rectangle,inner sep=3pt,anchor=north west,fill=white] 
        at ($(x)+(6pt,.75em)$) {\boxheading};
    \end{tikzpicture}
    \end{tightcenter}\vspace{0pt}%
    \ignorespacesafterend
}    
\newenvironment{problem}[2][]{\noindent\ignorespaces%
                                \FrameSep=6pt%
                                \parindent=0pt%
                \vspace*{-.5em}
                \ifthenelse{\isempty{#1}}{%
                  \begin{GrayBox}{\textsc{#2}}%
                }{%
                  \begin{GrayBox}{\textsc{#2} parametrised by~{#1}}%
                }
                \index[prob]{#2@\textsc{#2}}%
                \newcommand\Prob{Problem:}%
                \newcommand\Input{Input:}%
                \newcommand\Param{Parameter:}%
                \begin{tabular*}{\textwidth}{@{\hspace{.1em}} >{\itshape} p{1.6cm} p{0.8\textwidth} @{}}%
            }{
                \end{tabular*}%
                \end{GrayBox}%
                \vspace*{-.5em}
                \ignorespacesafterend
            }  
\newenvironment{problem*}[2][]{\noindent\ignorespaces%
                                \FrameSep=6pt%
                                \parindent=0pt%
                \vspace*{-.5em}
                \ifthenelse{\isempty{#1}}{%
                  \begin{GrayBox}{\textsc{#2}}%
                }{%
                  \begin{GrayBox}{\textsc{#2} parametrised by~{#1}}%
                }
                \index[prob]{#2@\textsc{#2}|textbf}%
                \newcommand\Prob{Problem:}%
                \newcommand\Input{Input:}%
                \begin{tabular*}{\textwidth}{@{\hspace{.1em}} >{\itshape} p{1.6cm} p{0.8\textwidth} @{}}%
            }{
                \end{tabular*}%
                \end{GrayBox}%
                \vspace*{-.5em}
                \ignorespacesafterend
            }       
\newlength{\wleft}  \newlength{\wright}
\definecolor{Maroon}{cmyk}{0, 0.87, 0.68, 0.32}
\definecolor{RoyalBlue}{cmyk}{1, 0.50, 0, 0}
\definecolor{Black}{cmyk}{0, 0, 0, 0}
\definecolor{White}{rgb}{1, 1, 1}
\let\orgdescriptionlabel\descriptionlabel
\def\@savelabel{}
\renewcommand*{\descriptionlabel}[1]{%
  \let\orglabel\label
  \let\label\@gobble
  \phantomsection
  \def\@savelabel{#1}
  \edef\@currentlabel{{\def\hfil{}#1}}
  \edef\@currentlabelname{#1}%
  \let\label\orglabel
  \orgdescriptionlabel{#1}%
}
\def\namedlabel#1#2{\begingroup
   \def\@currentlabel{#1}%
   \label{#2}\endgroup
}
\renewcommand{\th}{%
    \ifmmode
        ^\mathrm{th}%
    \else%
        \textsuperscript{th}\xspace%
    \fi%
}
\newcommand{\st}{%
    \ifmmode
        ^\mathrm{st}%
    \else%
        \textsuperscript{st}\xspace%
    \fi%
}
\newcommand{\nd}{%
    \ifmmode
        ^\mathrm{nd}%
    \else%
        \textsuperscript{nd}\xspace%
    \fi%
}
\newcommand{\rd}{%
    \ifmmode
        ^\mathrm{rd}%
    \else%
        \textsuperscript{rd}\xspace%
    \fi%
}
\def\Nesetril{Ne\v{s}et\v{r}il\xspace}
\def\Dvorak{Dvo\v{r}\'{a}k\xspace}
\def\Kral{Kr\'{a}l\xspace}
\newcolumntype{m}{>{$}l<{$}}
\newcolumntype{M}{>{$\displaystyle}l<{$}} 
\newcolumntype{L}{l}  
\newcolumntype{C}{c}  
\newcolumntype{R}{r}  
\newcolumntype{X}{>{\global\let\currentrowstyle\relax}}
\newcolumntype{^}{>{\currentrowstyle}}
\title{Domination above $r$-independence: \newline does sparseness help?}
\titlerunning{Domination above $r$-independence}
\author{Carl Einarson}{Royal Holloway, University of London, UK}{einarson.carl@gmail.com}{}{}
\author{Felix Reidl}{Birkbeck, University of London, UK}{f.reidl@dcs.bbk.ac.uk}{https://orcid.org/0000-0002-2354-3003}{}
\authorrunning{C. Einarson and F. Reidl}
\subjclass{Theory of computation $\rightarrow$ Parameterized complexity and exact algorithms}
\keywords{Dominating Set, Above Guarantee, Kernel, Bounded Expansion, Nowhere Dense}
\begin{document}

\maketitle
\begin{abstract}
  Inspired by the potential of improving tractability via
  gap- or above-guarantee parametrisations,
  we investigate the complexity of \Problem{Dominating Set}
  when given a suitable lower-bound witness. Concretely, we consider
  being provided with a maximal~$r$-independent set~$X$ (a set in which
  all vertices have pairwise distance at least~$r+1$) along the input
  graph~$G$ which, for~$r \geq 2$, lower-bounds the minimum size
  of any dominating set of~$G$. In the spirit of gap-parameters,
  we consider a parametrisation by the size of the `residual'
  set~$R := V(G)\setminus N[X]$.

  Our work aims to answer two questions: How does the constant~$r$ affect the
  tractability of the problem and does the restriction to sparse graph classes
  help here?
  For the base case~$r = 2$, we find that the problem is $\paraNP$-complete
  even in apex- and bounded-degree graphs. For~$r = 3$, the problem is~$\W[2]$-hard for
  general graphs but in $\FPT$ for nowhere dense classes and it admits a linear
  kernel for bounded expansion classes. For~$r \geq 4$, the parametrisation
  becomes essentially equivalent to the natural parameter, the size of the
  dominating set.
\end{abstract}

\newpage
\section*{Introduction}

The research of \emph{above/below guarantee} parameters as first used by Mahajan and
Raman~\cite{AboveGuarantee}  was an important step towards studying problems
whose natural parameters provided only trivial and unsatisfactory answers.
Case in point, the motivation for Mahajan and Raman was the observation that
every CNF-SAT formula with~$m$ clauses trivially has an assignment that
satisfies $\geq \ceil{m/2}$ clauses, thus question for the maximum number of
satisfied clauses is only interesting if~$k > \ceil{m/2}$, which of course
renders the parametrised approach unnecessary.  They therefore proposed to
study parametrisations `above guarantee': going with the previous example, we
would ask to satisfy~$\ceil{\sfrac{m}{2}} + k$ clauses or `$k$ above
guarantee'. After some isolated results in that direction
(\eg~\cite{VertexCoverAboveG,LinearArrangementAboveG})  the programme took up
steam after Mahajan~\etal presented several results and pointers in new
directions~\cite{AboveGuaranteeProgramme}
(\eg~\cite{MaxSatAboveTight,BetweennessAboveG,TernaryCSPAboveG,LinearArrangementAboveG}).
In particular, Cygan~\etal broke new ground for \textsc{Multiway Cut} and
\textsc{Vertex Cover} with algorithms that run in~$O^*(4^k)$ time, where~$k$
is the \emph{gap parameter} between an appropriate LP-relaxation and the
integral optimum~\cite{MultiwayCutAboveLP}. Lokshtanov \etal improved the
\textsc{Vertex Cover} case to~$O^*(2.3146^k)$ using a specialized branching 
algorithm~\cite{VCAboveLP}.

The latter result highlights an important realization: these alternative,
smaller parameters might not only provide the means to investigate problems
without `good' natural parameters, it might also provide us with faster
algorithms in practise! Gap- and above-guarantee parameters are attractive
because there is a reasonable chance that they are small in real-world
scenarios, something we often cannot expect from natural parameters.

To the best of our knowledge, so far no gap-parameter results are known for
domination problems and an above/below-guarantee result is only known in
bounded-degree graphs~\cite{AboveGuaranteeProgramme}. This is probably due to
the fact that there are no simple `natural' upper/lower bounds and in the case of
gap-parameters the LP-dualities do not provide much purchase. We therefore explore
this topic under the most basic assumptions: we are provided with a witness for a
lower bound on the domination number \emph{as input} and consider parametrisations
that arise from this additional information. In the case of
\Problem{Dominating Set}, the witness takes the form of a \emph{$2$-independent set},
that is, a set in which all vertices have pairwise distance at least three.
Note that this approach also captures a form of duality: the LP-dual of
dominating set describes a $2$-independent set, however, in general the two
optima are arbitrarily far apart. Recently, \Dvorak highlighted
this connection~\cite{DomsetDualitySparse} and proved that in certain sparse
classes the gap between the dual optima is bounded by a constant.

Thus, assume we are given a maximal $2$-independent set~$X$ alongside the input
graph~$G$. A parametrisation by~$|X|$ would go against the spirit of
gap-parameters, instead we parametrise by the size of  \emph{residual set}~$R :=
V(G)\setminus N[X]$, that is, all vertices that lie at distance two from~$X$
(since $X$ is maximal, no vertex can have distance three or more).
We choose this particular parameter for two reasons: 
\vspace*{-4pt}
\begin{enumerate}[(i)]
  \item For~$|R| = 0$ the problem is decidable in polynomial time since the
    domination number of  the graph is precisely~$|X|$. 
  \item The set~$X \cup R$ is
    a dominating set of~$G$. 
\end{enumerate} 
\vspace*{-4pt}
The first property is of course an
important pre-requisite for the problem to be in~$\FPT$ under this
parametrisation, while the second property guarantees us that the dominating
set size lies in-between~$|X|$ and~$|X| + |R|$.

Our first investigatory dimension is the constant~$r=2$ in the $2$-independent
set: intuitively, increasing the minimum distance between vertices in~$X$
increases the size of the parameter~$|R|$ and imposes more structure on the
input instance.  Our second dimension encompasses an approach that has been
highly successful in improving tractability of domination problems: restricting
the inputs to sparse graphs. While \Problem{Dominating Set} is $\W[2]$-complete
in general graphs, Alber \etal showed that it is fpt in planar
graphs~\cite{DomsetPlanarFPT}; Alon and Gutner later proved that assuming
degeneracy is sufficient~\cite{DomsetDegenerateFPT}. Philip, Raman, and Sikdar
extended this result yet further to graphs excluding a fixed bi-clique and also
proved that it admits a polynomial kernel~\cite{DomsetDegenerateKernel}. A
related line of research was the hunt for \emph{linear} kernels in sparse
classes. Beginning with such a kernel on planar graphs by Alber, Fellows, and
Niedermeier~\cite{DomsetPlanarKernel}, results on apex-minor free
graphs~\cite{BidimKernels}, graphs excluding a
minor~\cite{DomsetKernelHMinorFree} and classes excluding a topological
minor~\cite{DomsetKernelHTopFree} were soon proven. Recently, a linear kernel
for graphs of bounded expansion~\cite{DomsetBndExpKernel} (and an almost-linear
kernel for nowhere dense graphs~\cite{AlmostLinearKernelNowhereDense}) has subsumed all
previous results.\looseness-1

Our investigation of \Problem{Dominating Set} parametrised above an
$r$-independent set, for~$r \geq 2$, led us to the following results.
For~$r = 2$, the problem is $\paraNP$-complete already for~$|R| = 1$,
squashing all hope for an $\FPT$ or even~$\XP$ algorithm. This also holds
true if the inputs are restricted to sparse graph classes (apex-graphs/graphs
of maximum degree six).

For~$r = 3$, the problem is~$\W[2]$-hard in general graphs but admits
an~$\XP$-algorithm. In nowhere dense and bounded expansion classes, it is
fixed-parameter tractable. We further show, in the probably most technical
part of this paper, that it admits a linear kernel in bounded expansion
classes.

Finally, for~$r \geq 4$ the problem remains~$\W[2]$-hard in general graphs
and essentially degenerates to \Problem{Dominating Set} (hence, all the above
mentioned results in sparse classes translate in the parametrisation
above $r$-independence).

\section{Preliminaries}

A set~$X \subseteq V(G)$ is \emph{$r$-independent} if each pair of distinct
vertices in~$X$ have distance at least~$r{+}1$, thus an independent
set is $1$-independent.
We write~$N(v)$ and $N[v]$, respectively, for the neighbourhood and the closed
neighbourhood of a vertex~$v$. We extend this notation to sets as follows: for $X \subseteq V(G)$ we
let~$N(X)$ be all vertices not in $X$ that have a neighbour in~$X$ and~$N[X] :=
X \cup N(X)$. We let~$N^i(X)$ be all vertices not in $X$ that are at most
distance $i$ from any vertex in $X$ and we let~$N^i[X] = X \cup N^i(X)$.  A
vertex set~$Z \subseteq V(G)$ is \emph{dominated} by a set~$D \subseteq V(G)$ if
for every vertex~$z \in Z$ we have~$N[z] \cap D \neq \emptyset$, $D$ is then
called a \emph{$Z$-dominator}. We let~$\ds(G)$ denote the size of a minimum
dominating set of~$G$.

\begin{problem}{Dominating Set \textnormal{above $r$-independence}}
  \Input & A graph~$G$, a maximal $r$-independent set~$X \subseteq V(G)$, an
       integer~$p$. \\
  \Param & The size of the residual set~$R := V(G) \setminus N[X]$. \\
  \Prob & Does~$G$ have a dominating set of size~$p$?
\end{problem}

\noindent
Note that~$X \cup R$ is trivially a dominating set and that, for~$r
\geq 2$, it holds that~$\ds(G) \geq |X|$, thus we will
tacitly assume in the following that~$|X| \leq p \leq |X|+|R|$ since all
other instances are trivial.

We will frequently invoke the terms \emph{bounded expansion} and \emph{nowhere dense}
to describe graph classes. The definitions of these terms requires the introduction
of several concepts which will not be useful for the remainder of the paper, we
refer the reader to the book by \Nesetril and Ossona~de~Mendez~\cite{Sparsity}. In this
context, it is important to know that bounded expansion classes generalize most structurally
sparse classes (planar, bounded genus, bounded degree, $H$-minor free, $H$-topological minor free)
and nowhere dense classes contain bounded expansion classes in turn.
The following lemma and propositions for those two sparse graph classes will be needed
in the remainder of this paper:

\begin{lemma}[Twin class lemma~\cite{BndExpKernelsJournal,FelixThesis}]\label{lemma:twin}
  For every bipartite nowhere dense class there exists a 
  constant~$\omega$ and a function $f(s) = O(s^{o(1)})$ such that for 
  every member $G=(X,Y,E)$ of the class it holds that
  \begin{enumerate}
    \item $|\{ u \mid \deg(u) > 2\tau \}_{u \in Y}| \leq 2\tau\cdot |X|$, and
    \item $|\{ N(u) \}_{u \in Y}| \leq (\min\{ 4^\tau, \omega(e\tau)^\omega\} + 2\tau) \cdot |X|$.
  \end{enumerate}
  where $\tau = f(|X|) = O(|X|^{o(1)})$. If~$G$ is from a bounded expansion class, $\tau$ can be assumed a
  constant as well.
\end{lemma}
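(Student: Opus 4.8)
The plan is to reduce both claims to a count of distinct neighbourhoods and then feed that count the two sparsity features that make it work: bounded degeneracy (bounded average degree on every subgraph) for claim~(1), and the (almost-)linear neighbourhood complexity of bounded-expansion/nowhere-dense classes for claim~(2). Two preliminary manoeuvres set things up. First, since both classes are closed under subgraphs, delete from~$Y$ all but one vertex of each neighbourhood class; this keeps $G$ inside the class and changes neither quantity, so we may assume all vertices of~$Y$ have pairwise distinct neighbourhoods. Second, to obtain bounds in terms of~$|X|$ from parameters of $G$ that a priori are only $|V(G)|^{o(1)}=(|X|+|Y|)^{o(1)}$, fix a non-decreasing $g(m)=m^{o(1)}$ bounding both the degeneracy and the weak $2$-colouring number of every $m$-vertex member of the class; replacing $g(m)$ by $\sup_{m'\le m^2}g(m')$ — still $m^{o(1)}$ — we may assume the self-improvement $g(c\cdot m\cdot g(m))\le g(m)$ for every fixed constant~$c$ and large~$m$. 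Put $\tau:=g(|X|)=O(|X|^{o(1)})$; in a bounded-expansion class $g$, hence~$\tau$, is simply a constant and this second manoeuvre is vacuous.

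For claim~(1), let $Y_{>2\tau}:=\{u\in Y:\deg(u)>2\tau\}$ and suppose $|Y_{>2\tau}|>2\tau|X|$. Pick $Y'\subseteq Y_{>2\tau}$ with $|Y'|=2\tau|X|+1$ and set $H:=G[X\cup Y']$. Since every neighbour of a $Y$-vertex lies in~$X$, $|E(H)|=\sum_{u\in Y'}\deg_G(u)>2\tau|Y'|$, whereas $|V(H)|=(2\tau+1)|X|+1\le 4\tau|X|$; hence $H$ has average degree exceeding~$2\tau$ and therefore contains a subgraph of edge-to-vertex ratio larger than~$\tau$, i.e.\ of degeneracy larger than~$\tau$. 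But $|V(H)|\le 4\tau|X|=O(|X|\cdot g(|X|))$, so the self-improvement of~$g$ gives $g(|V(H)|)\le\tau$, contradicting that $H$ lies in the class. Thus $|Y_{>2\tau}|\le 2\tau|X|$, and a fortiori so is the number of distinct neighbourhoods of vertices of degree~$>2\tau$. (In a bounded-expansion class the same double count works verbatim with $\tau$ constant.)

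For claim~(2), by claim~(1) at most $2\tau|X|$ distinct neighbourhoods come from vertices of degree~$>2\tau$, so it remains to bound the distinct neighbourhoods among $Y_{\le2\tau}:=\{u\in Y:\deg(u)\le 2\tau\}$ by $\min\{4^\tau,\omega(e\tau)^\omega\}\cdot|X|$. For the $4^\tau$ branch, fix an order of $V(G)$ witnessing $\mathrm{wcol}_2(G)\le\tau$ (again absorbing the $|X|^{o(1)}$-bound into~$\tau$, using a priori that $|Y|=|X|^{1+o(1)}$) and, for $u\in Y_{\le2\tau}$, write $N(u)=P(u)\uplus S(u)$ for the neighbours preceding resp.\ succeeding~$u$. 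A double count, $\sum_u|S(u)|=\sum_{x\in X}|\{u:x\in S(u)\}|\le\sum_{x\in X}(\text{back-degree of }x)\le\tau|X|$, shows at most $\tau|X|$ of these $u$ have $S(u)\ne\emptyset$; for the remaining $u$ we have $N(u)=P(u)$, and setting $x_u:=\max N(u)$ in the order, every other $z\in N(u)$ is weakly $2$-reachable from~$x_u$ along the path $x_u$–$u$–$z$ (whose internal vertex~$u$ comes after~$x_u$), so $N(u)\subseteq\{x_u\}\cup\mathrm{WReach}_2(x_u)$, a set of size~$\le\tau+1$. Grouping these $u$ by~$x_u$ and counting subsets yields at most $2^{\tau+1}|X|\le 4^\tau|X|$ neighbourhoods, so in total $|\{N(u):u\in Y_{\le2\tau}\}|\le(4^\tau+\tau)|X|$. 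The alternative $\omega(e\tau)^\omega|X|$ bound — preferable when $\tau$ is not small — comes instead from viewing the low-degree neighbourhoods as a set system of sets of size~$\le 2\tau$ over~$X$ and invoking the (almost-)linear neighbourhood-complexity bounds for nowhere-dense/bounded-expansion classes, whose constant~$\omega$ may be taken as a bound on the size of a clique excluded as a depth-$1$ minor, with the $(e\tau)^\omega$ factor the usual binomial estimate coming from the bounded set size. Taking the better of the two branches and adding the $2\tau|X|$ high-degree contribution gives the stated bound.

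I expect the main obstacle to be precisely the tension between the two ``sizes'': the conclusion must be phrased in terms of~$|X|$, whereas every off-the-shelf sparsity parameter of~$G$ is controlled only in terms of $|V(G)|=|X|+|Y|$, which may be far larger. The self-improvement device in the first paragraph — exploiting that $m\mapsto m^{o(1)}$ is closed under polynomial substitution, so a fixpoint/no-smallest-counterexample argument pins $|Y|=|X|^{1+o(1)}$ up front — is what lets one certify the required $|X|^{o(1)}$- and $|X|$-sized bounds with a~$\tau$ depending on~$|X|$ alone; in a bounded-expansion class this evaporates, since all relevant parameters are absolute constants, which is why~$\tau$ may be taken constant there. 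The other genuinely nontrivial ingredient is the neighbourhood-complexity estimate for the low-degree part in the nowhere-dense regime, which uses the full strength of nowhere-denseness (weak colouring numbers / uniform quasi-wideness) rather than mere degeneracy, and which I would invoke as a black box rather than reprove.
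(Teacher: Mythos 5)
First, a point of reference: the paper does not prove this lemma at all --- it is imported verbatim from \cite{BndExpKernelsJournal,FelixThesis} --- so there is no in-paper argument to compare against; your proposal has to stand on its own. On its own terms, most of it does. Claim~(1) is correct: the edge count on $H=G[X\cup Y']$ with $|Y'|=2\tau|X|+1$ gives average degree $>2\tau$, hence a subgraph of minimum degree $>\tau$, and since $|V(H)|\le 4\tau|X|\le |X|^2$ the monotone envelope $\tilde g(m):=\sup_{m'\le m^2}g(m')$ really does let you take $\tau$ as a function of $|X|$ alone; this is the standard argument. The $\operatorname{wcol}_2$ argument for the $4^\tau$ branch is also correct as a combinatorial argument (the back-degree double count, and $N(u)\subseteq\operatorname{WReach}_2[x_u]$ for the all-predecessor vertices), and in the bounded-expansion regime, where $\operatorname{wcol}_2$ is an absolute constant of the class, it yields the stated bound with no further work.

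The genuine gap is in the nowhere dense regime of claim~(2), and it is exactly the tension you flag at the end --- but your proposed fix does not close it. You justify $\operatorname{wcol}_2(G)\le\tau=f(|X|)$ by asserting ``a priori $|Y|=|X|^{1+o(1)}$'' after de-duplication; that is precisely the conclusion of the lemma, not a hypothesis, and the $m\mapsto m^2$ envelope cannot absorb it. After keeping one representative per neighbourhood class, $|Y|$ is a priori only bounded by the number of subsets of $X$ of size $\le 2\tau$, i.e.\ roughly $|X|^{2\tau}$, and since $\tau=|X|^{o(1)}$ is unbounded, $\operatorname{wcol}_2$ of a graph on $|X|^{2\tau}$ vertices need not be $|X|^{o(1)}$; a no-smallest-counterexample bootstrap also fails for this branch because the target bound $4^\tau|X|$ is itself superpolynomial in $|X|$ when $\tau$ grows. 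What saves the lemma is the $\min$: in the nowhere dense case the relevant branch is $\omega(e\tau)^\omega|X|=|X|^{1+o(1)}$, which is the neighbourhood-complexity theorem for nowhere dense classes (proved via uniform quasi-wideness, with the bound genuinely in terms of $|X|$ rather than $|V(G)|$). You invoke that as a black box, which is legitimate since it is a prior result --- but you should be aware that it \emph{is} the entire substance of the lemma in that regime, so your proof should be read as: a self-contained argument for bounded expansion, plus a citation for nowhere dense, with the $4^\tau$ branch only claimed where $\tau$ is constant.
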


\noindent
We will also frequently invoke the following result regarding first-order (FO) 
model checking in bounded expansion and nowhere dense classes:

\begin{prop}[\Dvorak, \Kral, and Thomas~\cite{FOBndExp}]\label{prop:fo-check}
  For every bounded expansion class, the first-order model checking problem is solvable in linear fpt-time parametrised by the size of the input formula.
\end{prop}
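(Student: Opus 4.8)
Since this is a cited result we only indicate the strategy, which is that of \Dvorak, \Kral, and Thomas. The plan rests on three ingredients: the low tree-depth colourings that characterise bounded expansion, Gaifman's locality theorem, and linear-time \MSO\ model checking on graphs of bounded tree-depth. The key structural input is that for a class $\mc C$ of bounded expansion and every $p \in \N$ there is a constant $N = N_{\mc C}(p)$ together with a linear-time algorithm producing, for each $G \in \mc C$, a colouring $c \colon V(G) \to [N]$ such that the union of any $j \le p$ colour classes induces a subgraph of tree-depth at most~$j$. I would also use the standard fact that expanding a bounded expansion class by a bounded number of unary predicates again yields a bounded expansion class, so the algorithm may annotate $G$ with auxiliary marks without leaving the regime where such colourings exist.

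First I would apply Gaifman's theorem to rewrite the input sentence $\phi$, effectively, as a Boolean combination of basic local sentences of the form $\exists x_1 \cdots \exists x_k \bigl(\bigwedge_{i<j} \dist(x_i,x_j) > 2r \,\wedge\, \bigwedge_i \psi(x_i)\bigr)$, where $\psi(x)$ is $r$-local and $r,k$ depend only on $\phi$. This reduces the problem to (a) computing, for each such $r$-local $\psi$, the set $A_\psi := \{v \in V(G) : G \models \psi(v)\}$ in linear time, and (b) deciding whether $A_\psi$ contains $k$ vertices that are pairwise at distance more than~$2r$. Since $k$ and $r$ are constants, (b) is itself \FO-expressible over $G$ once $A_\psi$ is marked by a fresh unary predicate, so it folds back into the same procedure; alternatively it can be handled greedily using that graphs in $\mc C$ have bounded degeneracy.

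The heart of the argument is step (a), which I would prove by induction on the quantifier rank of $\psi$, using the colouring to peel off the outermost quantifier. A local Hanf/Gaifman-type normal form tells us that whether $\psi$ holds at $v$ is determined by bounded threshold-counts of the \emph{local types} realised by the vertices at each distance $1,\dots,r$ from~$v$; each such count, ranging over vertices that satisfy a formula of strictly smaller quantifier rank, can be computed inductively -- after augmenting $G$ with unary predicates that record the already-computed lower-rank types -- by a dynamic program. That dynamic program runs over the bounded-tree-depth subgraphs induced by subsets of at most $p$ colour classes (with $p$ chosen large enough in terms of $r$ and the quantifier rank), exploiting that any path of length at most $r$ meets at most $r+1 \le p$ colours and therefore lies inside one such subgraph; and on graphs of bounded tree-depth even \MSO\ model checking, and the counting we need, is linear-time fixed-parameter tractable via a direct dynamic program on the elimination forest.

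The main obstacle will be the augmentation/rank-reduction step itself: one has to establish simultaneously that a constant number of fresh unary predicates suffices to lower the quantifier rank by one, that evaluating each predicate is a genuinely lower-rank problem so the recursion is well founded, that the whole procedure remains linear time with all constants depending only on $\phi$ and $\mc C$, and that distances in $G$ are recovered correctly from the colour-class subgraphs even though a $G$-geodesic need not lie inside any single induced subgraph -- which is exactly where the bookkeeping over all $\le p$-subsets of colours, together with an inclusion--exclusion argument, is required. Assembling the pieces yields an algorithm running in time $f(|\phi|) \cdot |V(G)|$ for a computable $f$, that is, membership in $\FPT$ parametrised by the formula size.
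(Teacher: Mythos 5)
First, note that the paper does not prove this statement at all: Proposition~\ref{prop:fo-check} is imported verbatim from \Dvorak, \Kral, and Thomas~\cite{FOBndExp} and used as a black box, so there is no ``paper proof'' to compare against; what you have written is a reconstruction of the external result. As such a reconstruction it correctly identifies two genuine ingredients (linear-time low tree-depth colourings, and linear-time model checking on the bounded tree-depth pieces), but the step you place at the heart of the argument does not hold as stated. You claim that, for an $r$-local formula $\psi(x)$, whether $G \models \psi(v)$ ``is determined by bounded threshold-counts of the local types realised by the vertices at each distance $1,\dots,r$ from~$v$''. That is a Hanf-locality phenomenon and is only valid on structures of \emph{bounded degree}. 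Bounded expansion classes have unbounded degree, the $r$-ball around $v$ can be arbitrarily large and arbitrarily entangled, and the truth of an $r$-local formula at $v$ is emphatically not a function of how many vertices of each unary type sit at each distance from $v$ (already for a formula quantifying over pairs of neighbours of $v$, counting types at distance $1$ does not suffice). This is precisely the difficulty that forces the actual proof of \Dvorak, \Kral, and Thomas to take a different route: a genuine quantifier-elimination procedure in which each innermost existential quantifier is traded for finitely many new unary relations and \emph{function} symbols arising from fraternal/transitive augmentations of the graph, with the low tree-depth colourings used to certify that the augmented class still has bounded expansion and to evaluate the quantifier-free residue. Gaifman's theorem, which you also invoke, is the engine of the later nowhere dense result (Proposition~\ref{prop:fo-check-nowhere} of Grohe, Kreutzer, and Siebertz) rather than of the bounded expansion one.

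So the outer scaffolding of your sketch (locality, scattered sets, colour-class bookkeeping for short paths) is a legitimate architecture, and your steps for the scattered-set subproblem and for confining distance-$\le r$ paths to unions of at most $r+1$ colour classes are fine; but step (a), the evaluation of local formulas, rests on a reduction that fails for unbounded-degree sparse graphs, and that is exactly where all the work of the cited theorem lives. Since the proposition is used in this paper only as a cited tool, none of this affects the paper's results; but as a proof sketch of the cited theorem the proposal has a genuine gap at its central step.
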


\noindent
This result has since been extended to nowhere dense classes as well. Here,
\emph{almost linear fpt-time} means running time of the form~$O(f(k) \cdot
n^{1+o(1)})$ for some function~$f$.

\begin{prop}[Grohe, Kreutzer, and Siebertz~\cite{FONowhereDenseJournal}]\label{prop:fo-check-nowhere}
  For every nowhere dense class, he first-order model checking problem is solvable in almost linear fpt-time parametrised by the size of the input formula.
\end{prop}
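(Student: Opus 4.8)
The plan is to follow the game-theoretic strategy of Grohe, Kreutzer, and Siebertz; I only sketch the ingredients, as the full argument is long. Fix an \FO sentence~$\varphi$ and an (effectively) nowhere dense class~$\mathcal{C}$. First I would put~$\varphi$ into \emph{Gaifman normal form}: $\varphi$ is equivalent to a Boolean combination of sentences of the shape
\[
  \exists x_1 \cdots \exists x_k \Bigl( \bigwedge_{i < j} \dist(x_i, x_j) > 2r \;\wedge\; \bigwedge_{i} \psi(x_i) \Bigr),
\]
where~$r$ and~$k$ depend only on~$|\varphi|$ and~$\psi(x)$ is \emph{$r$-local}, that is, all of its quantifiers are relativised to the ball~$N^r[x]$. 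Computing this normal form does not involve the input graph, so it suffices to handle two tasks: (a)~evaluate an $r$-local formula inside an $r$-ball, and (b)~decide the \emph{scattered} existential quantifier, that is, whether there are $k$ vertices, pairwise at distance more than~$2r$, each of which satisfies~$\psi$ in its own ball.

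For task~(a) I would use that nowhere dense classes admit \emph{sparse neighbourhood covers}: for every fixed~$r$ one can compute, in almost linear time, a family of clusters of radius at most~$2r$ such that every $r$-ball is contained in some cluster while every vertex belongs to at most~$n^{o(1)}$ clusters; this rests on the fact that the weak $r$-colouring numbers of the graphs in~$\mathcal{C}$ are~$n^{o(1)}$ (see~\cite{Sparsity}). Since each cluster, together with the ambient relations restricted to it, again belongs to~$\mathcal{C}$, evaluating an $r$-local~$\psi$ at a vertex~$v$ reduces to model checking a smaller, relativised formula inside the cluster containing~$N^r[v]$, and one recurses on the structure of the formula. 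Summed over all vertices and clusters, this costs a factor of roughly~$n^{1+o(1)}$ per level of recursion.

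Task~(b) is the heart of the matter, and I would attack it via the characterisation of nowhere denseness by the \emph{splitter game}: on a nowhere dense class, for every radius~$r$ the ``splitter'' player has a winning strategy that lasts a number of rounds bounded in terms of~$r$ and~$\mathcal{C}$ and deletes only a bounded vertex set in each round; equivalently, via \emph{uniform quasi-wideness}, from any sufficiently large vertex set one can delete a bounded set~$D$ so that a large subset of the remainder becomes pairwise far apart. Deciding the scattered quantifier then becomes a recursion: either~$k$ is so small relative to the current vertex set that a bounded search settles it, or one produces such a deletion set~$D$, recurses on the still nowhere dense graph~$G - D$ (for which the number of remaining splitter rounds has dropped), and uses the bounded interface~$D$ --- together with the cover recursion of task~(a) relativised to~$G - D$ --- to reconstruct whether the scattered sentence holds in~$G$. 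Since the game tree has depth bounded independently of~$n$, only constantly many (in~$n$) levels of this recursion occur, and collecting the factors yields total running time~$O(f(|\varphi|) \cdot n^{1+o(1)})$, which is almost linear fpt-time.

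The hard part will be task~(b). Uniform quasi-wideness only guarantees the \emph{existence} of the deletion set~$D$ and of the scattered witnesses, whereas the algorithm needs an \emph{effective} procedure producing them in almost linear time; and, more delicately, one must argue that passing to~$G - D$ loses no information even though the deleted vertices may be used by the local quantifiers of the formulas~$\psi$. Making this precise forces the neighbourhood-cover recursion of~(a) and the splitter recursion of~(b) to be interleaved and their parameters to be controlled simultaneously, which is the technically heaviest part of the proof.
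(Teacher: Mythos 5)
This proposition is not proved in the paper at all: it is imported verbatim as a black box from the cited work of Grohe, Kreutzer, and Siebertz, so there is no in-paper argument to compare against. Your sketch is a faithful outline of the actual proof in that reference --- Gaifman normal form, almost-linear sparse neighbourhood covers obtained from the $n^{o(1)}$ bound on weak colouring numbers, and the splitter-game/uniform-quasi-wideness recursion for the scattered existential quantifier --- and you correctly identify the genuinely delicate points (making quasi-wideness algorithmically effective, and the interleaving of the cover recursion with the splitter recursion), as well as the caveat that uniformity of the algorithm requires the class to be \emph{effectively} nowhere dense. For the purposes of this paper nothing more is needed, since the result is used only as a subroutine via Proposition~\ref{prop:fo-check-nowhere}.
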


\section{Above $2$-independence: hard as nails}

\noindent
In this section we will show that when we let $r=2$, we find that the problem
is $\paraNP$-complete for~$|R| = 1$, hence this parametrisation does not even
admit an $\XP$-algorithm. In the following we first present a reduction from
\Problem{3SAT} and then discuss how to modify it to reduce into sparse graph
classes.

Since $X$ is a (maximal) $2$-independent set, we know that each vertex in $R$
is a neighbour of some vertex in $N(X)$, otherwise we could add this vertex
to~$X$. Let us now describe the reduction. Let $\phi$ be $\Problem{3SAT}$-instance
with variables~$x_1,\ldots,x_n$ and clauses~$C_1,\ldots,C_m$. We construct~$G$
as follows (\cf Figure~\ref{fig:reduction-2-is}):
\begin{enumerate}
  \item For each variable~$x_i$, add a triangle with vertices~$x_i$, $t_i$, $f_i$.
  \item For each clause~$C_j$ add a vertex~$c_j$. If the variable~$x_i$ occurs
        positively in~$C_j$, add the edge~$c_j t_i$; if it occurs negatively, add
        the edge~$c_j f_i$.
  \item Add a single vertex~$y_1$ to the graph and connect it to each clause
        variable~$c_i$. Add two further vertices~$y_2, y_3$ and add the
        edges~$y_1y_2$ and~$y_2y_3$.
\end{enumerate}
We further set~$X := \{x_1,\ldots,x_n,y_1 \}$ as our~$2$-independent set;
notice that the only vertex not contained in~$N[X]$ is~$y_3$. Hence,
$R := \{y_3\}$.

\begin{figure}[tb]
  \centering
  \includegraphics[width=.9\textwidth]{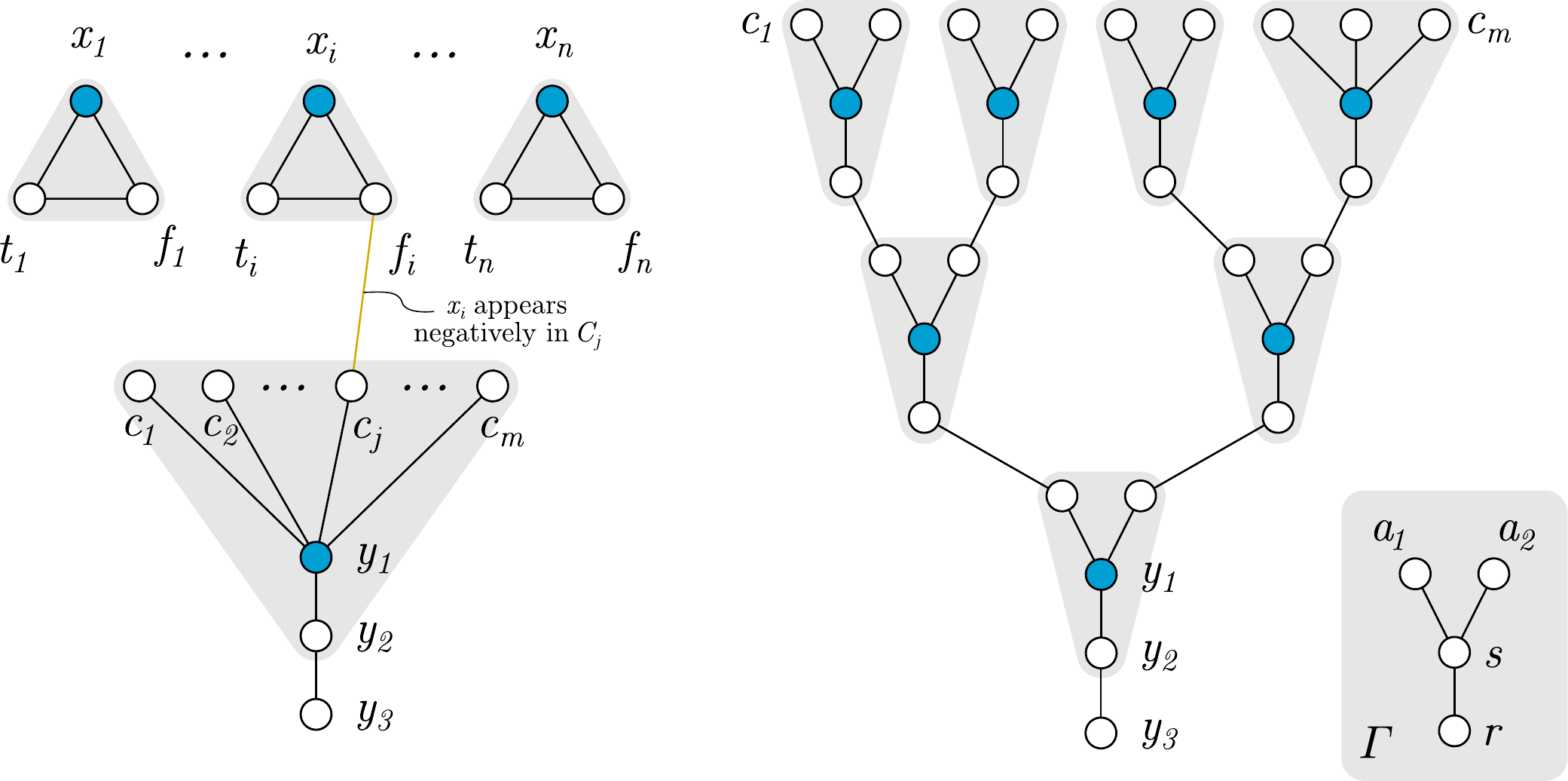}
  \caption{\label{fig:reduction-2-is}%
    Sketch of reduction from \Problem{3SAT} to \Problem{Dominating Set} above
    $2$-independent set. The left side shows the basic reduction, the right
    side shows the bounded-degree replacement gadget for the clause part, with
    the tree-gadget~$\Gamma$ highlighted on the bottom right. The $2$-independent
    set~$X$ is coloured blue and $N[X]$ is shaded in grey. In both
    constructions the set~$R$ consists only of~$y_3$. }
\end{figure}

\begin{lemma}\label{lemma:2-ind}
  $\phi$ is satisfiable iff $G$ has a dominating set of size~$|X|$.
\end{lemma}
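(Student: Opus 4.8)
The plan is to prove both implications directly, the forward one by exhibiting a dominating set and the backward one by a short forcing argument.

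\medskip\noindent\textbf{Sufficiency.} Suppose $\phi$ has a satisfying assignment $\alpha$. I would take
$D := \{y_2\} \cup \{ t_i : \alpha(x_i) = \text{true}\} \cup \{f_i : \alpha(x_i) = \text{false}\}$,
which has size exactly $n+1 = |X|$. Checking that $D$ dominates $G$ is routine: each variable triangle is a clique, so the chosen endpoint dominates $x_i$, $t_i$, $f_i$; the vertex $y_2$ dominates $y_1$, $y_2$, $y_3$; and every clause vertex $c_j$ is dominated because the literal of $C_j$ set true by $\alpha$ contributes a neighbour of $c_j$ to $D$ (namely $t_i$ if $x_i$ occurs positively in $C_j$ and $\alpha(x_i)$ is true, or $f_i$ if $x_i$ occurs negatively in $C_j$ and $\alpha(x_i)$ is false).

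\medskip\noindent\textbf{Necessity.} Suppose $D$ is a dominating set of $G$ with $|D| = |X| = n+1$. The core step is a counting/forcing observation. We have $N[x_i] = \{x_i, t_i, f_i\}$ and $N[y_3] = \{y_2, y_3\}$, and these $n+1$ closed neighbourhoods are pairwise disjoint; hence $D$ contains at least one vertex of each, so in fact \emph{exactly} one vertex of each and nothing else. In particular $D$ is disjoint from $\{y_1\} \cup \{c_1,\dots,c_m\}$. Now $y_1$ has neighbours only among $y_2$ and the clause vertices, so the unique element of $D \cap \{y_2, y_3\}$ must be $y_2$. Next, each clause vertex satisfies $N[c_j] = \{c_j, y_1\} \cup \{t_i : x_i \text{ positive in } C_j\} \cup \{f_i : x_i \text{ negative in } C_j\}$; since $c_j, y_1 \notin D$, the set $D$ must contain some $t_i$ with $x_i$ occurring positively in $C_j$ or some $f_i$ with $x_i$ occurring negatively in $C_j$. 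Finally, because $D$ contains at most one of $t_i, f_i$ for each $i$, define $\alpha(x_i) := \text{true}$ if $t_i \in D$, $\alpha(x_i) := \text{false}$ if $f_i \in D$, and arbitrarily otherwise; the clause-domination condition just derived shows that $\alpha$ satisfies every $C_j$, so $\phi$ is satisfiable.

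\medskip\noindent The main obstacle — really a point demanding care rather than a genuine difficulty — is the forcing step in the necessity direction: one must verify that the $n+1$ pairwise-disjoint closed neighbourhoods $N[x_1],\dots,N[x_n],N[y_3]$ exhaust the entire budget of $D$, so that no vertex of $D$ is left over to dominate $y_1$ or a clause vertex directly. This is exactly what channels all domination of the clauses through the truth-setting vertices $t_i, f_i$ and makes the extracted assignment work; everything else is a direct check.
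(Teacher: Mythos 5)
Your proof is correct and follows essentially the same route as the paper's: the same dominating set in the forward direction, and the same budget-exhaustion forcing in the backward direction. Your packaging of the forcing step via the $n+1$ pairwise-disjoint closed neighbourhoods $N[x_1],\dots,N[x_n],N[y_3]$ is a slightly cleaner way to organise what the paper does with a pendant-vertex exchange followed by a per-gadget count, and it also neatly sidesteps the paper's exchange argument for replacing $x_i$ by $t_i$ or $f_i$.
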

\begin{proof}
  Assume~$\phi$ is satisfiable and fix one satisfying assignment~$I$. We construct
  a dominating set~$D$ as follows: if~$x_i$ is true under~$I$, add~$t_i$ to~$D$;
  otherwise add~$f_i$. Since~$I$ satisfies every clause of~$\phi$ the dominating
  set so far dominates every clause vertex and, of course, every variable gadget.
  The remaining undominated vertices are~$y_1, y_2, y_3$, thus adding~$y_2$ to~$D$
  yields a dominating set of~$G$ of size~$|X|$.

  In the other direction, assume that~$D$ is a dominating set of~$G$ of size~$|X|$.
  Since~$y_3$ is a pendant vertex we can assume that~$y_2 \in D$ (if~$y_3$ would
  be in~$D$ we could exchange it for~$y_2$). That leaves~$|X|-1 = n$ vertices in~$D$,
  precisely the number of variable-gadgets. Since every variable gadget must include
  at least one vertex of~$D$, we conclude the every such gadget contains precisely
  one dominating vertex. Since that depletes our budget, no other vertex is contained
  in~$D$.

  By the usual exchange argument we may assume that the dominating vertex in
  each variable gadget is either~$f_i$ or~$t_i$ and not~$x_i$ for~$1 \leq i
  \leq n$; hence the dominating vertices inside the variable gadgets encode a
  variable assignment~$I_D$ of~$\phi$. Finally, note that the clause vertices are
  not dominated by~$y_2$ and~$y_1$ is not contained in~$D$. Hence, they must be
  completely dominated by vertices contained in the variable gadgets. Then, by
  construction, the assignment~$I_D$ satisfies~$\phi$ and the claim follows.
\end{proof}

\noindent
We conclude that $\Problem{3SAT}$ many-one reduces to~\Problem{Dominating Set}
above $2$-independence already with~$|R| = 1$. We obtain the following two
corollaries that demonstrate that sparseness cannot help tractability here:

\begin{corollary}
  \Problem{Dominating Set} above $2$-independence is $\paraNP$-complete
  in apex-graphs.
\end{corollary}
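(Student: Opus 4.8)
The plan is to run the reduction of Lemma~\ref{lemma:2-ind} starting not from arbitrary \Problem{3SAT} but from a \emph{planar} restriction chosen so that the graph~$G$ it produces becomes planar after deleting a single vertex. The natural choice is \Problem{Planar Monotone 3SAT}: instances in which every clause is all-positive or all-negative and which come equipped with a planar drawing of the variable--clause incidence graph placing all variables on a horizontal line, all positive clauses in the open upper half-plane and all negative clauses in the open lower half-plane. This problem is \NP-complete. Membership of \Problem{Dominating Set} above $2$-independence in~$\paraNP$ is immediate since \Problem{Dominating Set} lies in~$\NP$, so it suffices to give the reduction and to verify that its output is always an apex-graph.

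Given such a formula~$\phi$ together with its monotone planar drawing, apply the construction of Lemma~\ref{lemma:2-ind} verbatim: a triangle~$x_i t_i f_i$ for each variable~$x_i$, a vertex~$c_j$ for each clause~$C_j$ joined to~$t_i$ when~$x_i$ occurs positively in~$C_j$ and to~$f_i$ when it occurs negatively, a vertex~$y_1$ adjacent to all~$c_j$, and the pendant path~$y_1 y_2 y_3$. Exactly as in Lemma~\ref{lemma:2-ind} we take~$X := \{x_1,\dots,x_n,y_1\}$, obtain~$R = \{y_3\}$ so that the parameter value is~$1$, and the equivalence between satisfiability of~$\phi$ and the existence of a dominating set of~$G$ of size~$|X|$ is precisely the statement of that lemma and needs no reproving. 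It therefore only remains to check that~$G - y_1$ is planar, that is, that~$G$ is an apex-graph witnessed by~$y_1$.

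To see this I would modify the given drawing locally: keep each variable vertex at its point, rename it~$x_i$, and inside a tiny disk around that point place~$t_i$ just above the variable line and~$f_i$ just below it, drawing the triangle edges~$x_i t_i$,~$x_i f_i$,~$t_i f_i$ within the disk. Every edge that previously left the variable point upward (towards a positive clause) is reattached to~$t_i$, and every edge leaving downward (towards a negative clause) to~$f_i$; since the positive occurrences and the negative occurrences of~$x_i$ occupied two complementary arcs of its rotation, no crossings are created. The clause vertices~$c_j$ and all remaining edges were crossing-free in the original drawing and are untouched, and deleting~$y_1$ detaches the pendant path, leaving~$y_2 y_3$ as an isolated edge that can be placed anywhere. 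Hence~$G - y_1$ is planar, $G$ is an apex-graph, and the reduction shows that \Problem{Dominating Set} above $2$-independence is \NP-hard on apex-graphs already for parameter value~$1$; together with the trivial membership above, this gives $\paraNP$-completeness.

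The only delicate point is the choice of source problem and the bookkeeping that matches the two sides of its monotone planar embedding to the two literal vertices~$t_i, f_i$ of each variable triangle; once such a sign-separated planar embedding is fixed, the rest---including the dominating-set equivalence, which is inherited unchanged from Lemma~\ref{lemma:2-ind}---goes through routinely.
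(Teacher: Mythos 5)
Your proposal is correct and follows essentially the same route as the paper: keep the Lemma~\ref{lemma:2-ind} construction verbatim, source the reduction from a planar 3SAT variant whose sign-separated embedding lets the variable triangles and clause edges be drawn without crossings, and observe that $y_1$ is the unique apex (the equivalence being inherited from Lemma~\ref{lemma:2-ind}). The only difference is cosmetic --- you start from \textsc{Planar Monotone 3SAT} whereas the paper uses Lichtenstein's separable \textsc{Planar 3SAT}, in which the planarity of the literal-split incidence graph is already guaranteed rather than re-derived from the monotone drawing.
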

\begin{proof}
  We use the above construction but reduce from a planar variant
  of~\Problem{3SAT}.  To ensure that we can construct variable-gadgets without
  edge crossings, we choose to reduce from Lichtenstein's \Problem{Planar
  3SAT} variant~\cite{SeparablePlanar3SAT} which ensures that the following
  graph~$G'$ derived from the \Problem{Planar 3SAT} instance~$\phi$ is planar:
  \begin{enumerate}
    \item Every variable~$x_i$ of~$\phi$ is represented by two literal vertices
          $t_i,f_i$ with the edge~$t_if_i \in G'$
    \item Each clause~$C_j$ is represented by a vertex~$c_j$. If the
          variable~$x_i$ occurs positively in~$C_j$, the edge $c_j t_i$
          exits; if it occurs negatively, the edge~$c_j f_i$ exists.
  \end{enumerate}
  To complete~$G'$ to~$G$ we have to add the vertices~$x_i$ and connect
  them to~$t_i, f_i$. This is clearly possible without breaking planarity
  (picture placing~$x_i$ on the middle of the line segment~$f_i t_i$ and
  moving it perpendicular by a small amount, then the edges~$x_i f_i$ and~$x_i t_i$
  can be embedded without crossing other edges). The vertices~$y_2, y_3$ can
  be placed anywhere; finally the vertex~$y_1$ will break planarity (the embedding
  does not guarantee that the clause vertices lie on the outer face of the graph)
  and we conclude that~$G$ is indeed an apex-graph.
\end{proof}

\begin{corollary}
  \Problem{Dominating Set} above $2$-independence is $\paraNP$-complete
  in graphs of maximum degree six.
\end{corollary}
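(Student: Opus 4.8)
The goal is to adapt the \Problem{3SAT}-reduction of Lemma~\ref{lemma:2-ind} so that the host graph has maximum degree six while keeping~$|R| = 1$. There are exactly two sources of unbounded degree in the basic construction: the literal vertices~$t_i, f_i$ (a variable may occur in many clauses) and, much more severely, the vertex~$y_1$, which is adjacent to \emph{every} clause vertex~$c_j$. The plan is to replace the high-degree parts with bounded-degree gadgets, as indicated on the right side of Figure~\ref{fig:reduction-2-is}, in such a way that (a) $X$ remains a maximal $2$-independent set, (b) $R$ still consists of the single vertex~$y_3$, and (c) the equivalence ``$\phi$ satisfiable $\iff$ $\ds(G) = |X|$'' is preserved.

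First I would handle the literal side by the standard trick of splitting each literal vertex into a bounded-degree \emph{cycle} (or path) of copies: for literal~$\ell \in \{t_i, f_i\}$ occurring in~$d$ clauses, replace~$\ell$ by~$d$ copies joined in a cycle, each copy adjacent to one clause vertex, and attach to each copy a small pendant-triangle gadget forcing a dominating set either to pick the copy or its private gadget-vertex; one then argues by the usual propagation-around-the-cycle argument that in a minimum dominating set either all copies of~$t_i$ or all copies of~$f_i$ are ``active'', recovering a consistent assignment. This only adds a constant amount to the degree of the~$c_j$'s and keeps the~$x_i$-triangles intact, so~$X$ and the count~$|X|$ are adjusted in a controlled way (each such gadget contributes a fixed number both to~$|X|$ and to the dominating-set budget).

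The harder part, and the one I expect to be the main obstacle, is replacing the single apex-like vertex~$y_1$: we need a low-degree structure that is (i) undominated unless we ``pay'' for it, so that spending exactly~$|X|$ vertices forces the clause vertices to be dominated from the variable side, and (ii) compatible with~$R = \{y_3\}$ and with~$X$ being maximal $2$-independent. The natural fix is a \emph{tree gadget}~$\Gamma$: build a bounded-degree rooted tree whose leaves are identified with (or attached to) the clause vertices~$c_j$, put~$y_1$ at the root, and hang the path~$y_1 y_2 y_3$ off the root as before. One then needs to choose the internal structure of~$\Gamma$ and place auxiliary ``forced'' pendant gadgets on its internal nodes so that: dominating all of~$\Gamma$ except the leaves costs a fixed number~$t$ of vertices regardless of the assignment; these~$t$ vertices, when chosen optimally, dominate~$y_1$ and all internal nodes but \emph{not} the leaf/clause vertices; and conversely any dominating set of size~$|X|$ must use exactly this pattern inside~$\Gamma$, leaving the clause vertices to be covered by the literal gadgets — reproducing the satisfiability argument of Lemma~\ref{lemma:2-ind}.

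The remaining bookkeeping is routine but must be checked: verify that every internal gadget vertex lies in~$N[X]$ so that~$R = \{y_3\}$ still holds; verify that~$X$ (the~$x_i$'s together with the chosen representatives from the literal-cycle gadgets and the tree gadget, plus~$y_1$) is genuinely $2$-independent and maximal; recompute~$|X|$ and set the target domination number to~$p := |X|$; and confirm the degree bound — the~$x_i$'s have degree~$2$, the literal-cycle copies have degree~$O(1)$, the clause vertices have degree bounded by their arity ($\leq 3$) plus tree-edges, and the tree/gadget nodes have degree bounded by a fixed constant, which with careful choice of branching factor can be pushed down to six. Since \Problem{3SAT} of bounded occurrence remains $\NP$-hard (indeed, one may start from $(3,4)$-SAT or similar), this yields $\paraNP$-completeness already for~$|R| = 1$ in graphs of maximum degree six.
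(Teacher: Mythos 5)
Your overall strategy matches the paper's: the sole obstacle to a degree bound is the apex vertex~$y_1$, and it is removed by replacing it with a bounded-degree tree over the clause vertices whose internal nodes carry forced gadgets, while literal degrees are controlled by starting from a bounded-occurrence variant of \Problem{3SAT}. (The paper reduces directly from $(3,4)$SAT, so $t_i$ and $f_i$ have degree at most $4+2=6$ with no literal-splitting gadget at all; your cycle-of-copies construction is an unnecessary detour that you yourself note can be skipped, and it would force you to re-verify $2$-independence and $|R|=1$ inside the modified variable gadgets for no gain.)

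The genuine gap is that the tree gadget~$\Gamma$ --- the entire technical content of this corollary --- is specified only by a wish-list of properties (``choose the internal structure\ldots so that\ldots'') rather than constructed, and the forcing argument is asserted rather than proved. Concretely, the paper's gadget for each internal tree node consists of connector vertices $a_1,a_2$ attached to the two children, a vertex $s^i$ adjacent to $a_1,a_2$, and a pendant $r^i$; crucially, every $s^i$ is added to~$X$, which simultaneously (a)~keeps all tree vertices inside $N[X]$ so that $R=\{y_3\}$ survives, and (b)~raises the budget $p=|X|$ by exactly one per gadget, which is exactly the cost of the forced choice $r^i\in D$. The correctness direction you still need to supply is the propagation: the pendant $y_3$ forces $y_2=r^{\mathrm{root}}$ into~$D$, which dominates $s^{\mathrm{root}}$ but not the $a$-vertices of the gadgets above it, so their $r$-vertices must in turn be chosen, and so on up to the clause vertices --- which are therefore never dominated from the tree side and must be dominated by the literal vertices, exactly as in Lemma~\ref{lemma:2-ind}. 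Until you fix a concrete gadget, carry out this induction, and check that the $s^i$ are pairwise at distance at least three from each other and from the $x_i$ (so that $X$ is a maximal $2$-independent set), the proposal is a plan rather than a proof.
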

\begin{proof}
  We reduce from~\Problem{(3,4)SAT} ($\NP$-hardness shown in~\cite{Bounded34SAT})
  in which every clause has size three and every variable occurs in at most
  four clauses.
  We use the above construction with one modification.
  Instead of connecting all clause vertices to one vertex we create a bounded-degree
  tree with the clause-vertices as its leaves.

  We begin by partitioning the clause vertices in pairs~$P_i$; if there is an odd number
  of vertices the last group will have three vertices.
  Then, for each group $P_i = \{c,c'\}$, we add two vertices $s^i, r^i$, connect
  the clause-vertices~$c$ and~$c'$ to $s^i$, and add the edge $s^ir^i$.
  We further add each $s^i$ to our $2$-independent set $X$ and then create a set $L_1$
  consisting of each $r^i$.
  Now we iteratively construct the next level of the tree, starting with $L := L_1$:
  \begin{enumerate}
    \item If~$L = \{r^1\}$, create a single vertex~$y_3$, connect it to~$r^1$ and finish,
          otherwise proceed with the next step.
    \item Partition the vertices in $L$ into~$\ell$ groups~$\{l_i,r_i\}$ of pairs.
          If $|L|$ is odd, the last group will be a triple~$\{l_i, c_i, r_i\}$ instead.
    \item For each group, create a tree-gadget $\Gamma^i$,
	        with vertices $\{a_1, a_2, s^i, r^i\}$ (and $a_3$ if the group contains a third vertex),
          and edges $l_ia_1, r_ia_2$, ($c_ia_3$),
          $a_1s^i$, $a_2s^i$, ($a_3s^i$), and $s^ir^i$.
    \item Add each $s^i$ to $X$, let $L$ now be the set of all $r^i$ (for $1 \leq i \leq \ell$)
          and continue with Step~1.
  \end{enumerate}
  Figure \ref{fig:reduction-2-is}, on the right side, shows an example of this
  construction. We note that, in the last tree-gadget, $s^i$ and $r^i$ are the
  same vertices as $y_1$ and $y_2$ respectively in the figure. We conclude the
  construction by adding each $x_i$ from the variable-gadgets to $X$ and
  setting $p := |X|$. Notice that the only vertex not contained in~$N[X]$
  is~$y_3$ and thus $R :=
  \{y_3\}$.

  Since each variable in~\Problem{(3,4)SAT} can be in up to four clauses, the
  maximum degree for $t_i$ and $f_i$ is six. All clause-vertices have degree
  at most four and all other types of vertices have a degree not higher than
  that, hence the claimed degree-bound holds. It is left to show that
  $\phi$ is satisfiable iff there is a dominating set of size $p = |X|$ in the graph.

  Let us assume that $\phi$ is satisfiable and fix one satisfying assignment $I$.
  We construct a dominating set as follows,
  beginning in the same way as in Lemma~\ref{lemma:2-ind}:
  if $x_i$ is true under $I$, add $t_i$ to $D$, otherwise add $f_i$.
  Since $I$ satisfies every clause of $\phi$ the dominating set so far dominates
  every clause vertex and every variable gadget.
  Now, the remaining undominated vertices are the tree-vertices,
  and our remaining budget is $|X|-n$ which is equal to the amount of tree-gadgets.
  Since every clause vertex is already dominated we can,
  for each tree-gadget $\Gamma^i = \{a_1, a_2, s^i, r^i \}$,
  add $r^i$ to the dominating set.
  This will dominate $s^i$ and the corresponding $a_1$ or $a_2$ in the tree-gadget below it,
  hence we can dominate all vertices of the graph within the budget~$|X|$.

  In the other direction, assume that $D$ is a dominating set of $G$ of size
  $|X|$. Since $y_3$ is a pendant vertex we can assume that $y_2 \in D$. Thus,
  in the last tree-gadget $r^i$ ($y_2$) is in the dominating set.
  This means that in order for $a_1$ and $a_2$ to be dominated,
  $r^i$ in both tree-gadgets above has to be
  in the dominating set. This holds for all tree-gadgets, all the way up to
  the clause vertices. Since we now have one vertex per tree-gadget in the
  dominating set this leaves $n$ vertices. Just as in Lemma \ref{lemma:2-ind},
  we note that, for each variable gadget, either $t_i$ or $f_i$ is in the
  dominating set. We know that the clause variables are not dominated by
  anything in the tree gadgets and thus must be dominated by the variable
  vertices. As stated in Lemma \ref{lemma:2-ind}, the dominating vertices
  inside the variable gadgets encode a variable assignment $I_D$ that
  satisfies $\phi$ and the claim follows.
\end{proof}

\section{Above $3$-independence: sparseness matters}

\subsection{$\W[2]$-hardness in general graphs}\label{sec:3-hardness}

\noindent
In the following we present a result for
\Problem{Dominating Set} above~$3$-independence,
namely that it is $\W[2]$-hard in general graphs.
We show this by reduction from \Problem{Colourful Dominating Set}
parametrised by the number of colours $k$:

\begin{problem}[$k$]{Colourful Dominating Set}
  \Input & A graph~$G$ with a vertex partition~$C_0,C_1,\ldots,C_k$. \\
  \Prob & Is there a set that dominates~$C_0$ and uses exactly one vertex
          from each set~$C_1,\ldots,C_k$?
\end{problem}

\noindent
It is easy to verify that \Problem{Colourful Dominating Set}
is $\W[2]$-hard by reducing from \Problem{Red-Blue Dominating Set}: 
we copy the blue set $k$ times and make each copy a
colour set $C_i$, $1 \leq i \leq k$, and let $C_0$ be the red set.

\begin{figure}[tb]
  \centering
  \includegraphics[width=.85\textwidth]{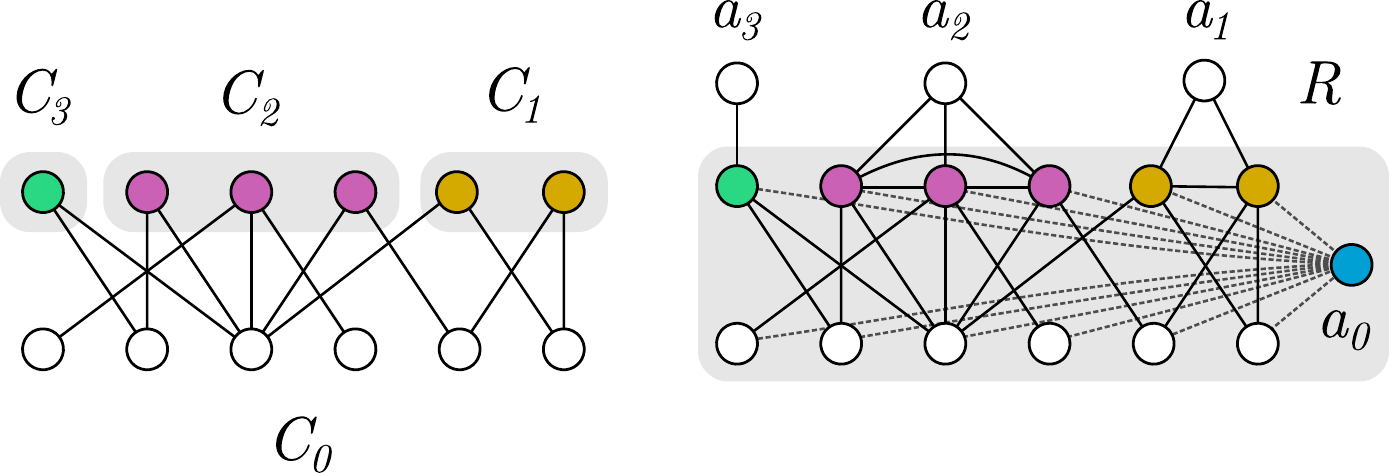}
  \caption{\label{fig:reduction-3-is}%
    Sketch of reduction from \Problem{Colourful Dominating Set} to \Problem{Dominating Set} above
    $r$-independent set for~$r \geq 2$. The set~$X$ contains only the vertex~$a_0$, the remaining
    vertices~$a_i$ are all contained in the residual~$R$.
  }
\end{figure}

\begin{lemma}\label{lemma:general-3-w2}
  \Problem{Dominating Set} above~$r$-independence is~$\W[2]$-hard for~$r \geq 2$.
\end{lemma}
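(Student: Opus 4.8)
The plan is to reduce from \Problem{Colourful Dominating Set} parametrised by the number of colours~$k$, which was just argued to be~$\W[2]$-hard, and to build an instance of \Problem{Dominating Set} above~$r$-independence whose residual set~$R$ has size~$O(k)$. Given a colourful instance~$(G, C_0, C_1, \ldots, C_k)$, I would introduce, for each colour class~$C_i$ with~$1 \le i \le k$, a fresh vertex~$a_i$ adjacent to all of~$C_i$; picking a dominating vertex ``for colour~$i$'' should correspond to choosing exactly one vertex of~$C_i$ to dominate~$a_i$. To force the budget to be spent exactly one-per-class, I attach to each~$a_i$ a short private pendant path (or a small gadget of pendant vertices) that can only be dominated from within~$\{a_i\} \cup C_i$ and that is ``cheap'' precisely when one vertex of~$C_i$ is taken. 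I also add a single root vertex~$a_0$ that is made adjacent to all of~$C_0$ and to the~$a_i$'s (possibly via a connecting path), and I place~$a_0$ into the~$r$-independent set~$X$ so that~$N^r[a_0]$ swallows~$C_0 \cup \{a_1, \ldots, a_k\}$ and most of~$G$; the leftover vertices — the tips of the pendant paths, which sit at distance exactly~$r+1$ from~$a_0$ — form the residual set~$R$, of size~$O(k)$. As suggested by Figure~\ref{fig:reduction-3-is}, the key is that~$X = \{a_0\}$ and the~$a_i$ (or rather the far endpoints of their attached gadgets) land in~$R$.

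The main verification is the usual two-directional equivalence together with a size/structure check. In the forward direction, a colourful dominating set~$S$ of size exactly~$k$ (one per~$C_1, \ldots, C_k$, dominating~$C_0$) is turned into a dominating set of~$G$ of a prescribed size~$p$ by taking~$S$ together with the~$\Theta(1)$ vertices per gadget needed to mop up the pendant paths, plus~$a_0$ to finish off~$C_0$ and the connector vertices. In the reverse direction, one argues by the standard pendant-vertex exchange that an optimum dominating set of size~$p$ must spend its budget as forced — one vertex ``per gadget'' to hit the pendant tips, which (by making the gadgets private to~$C_i$) can be pushed into~$C_i$ — so that restricting to~$\bigcup_i C_i$ yields exactly one vertex per class, and the fact that~$a_0$ (or the connector) cannot dominate~$C_0$ means~$C_0$ is dominated from within the chosen class-vertices, recovering a solution to the colourful instance. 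Separately one must confirm that~$X$ is a \emph{maximal}~$r$-independent set (only~$a_0$ is in~$X$, so maximality just says every other vertex is within distance~$r$ of~$a_0$, which the connecting paths are chosen to guarantee) and that~$|R| = O(k)$, giving a parametrised reduction.

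The step I expect to be the main obstacle is designing the pendant/path gadget so that it simultaneously (a) keeps exactly the intended vertices — and no others — outside~$N^r[a_0]$, i.e.\ controls distances precisely for every~$r \ge 2$ at once, and (b) enforces the ``exactly one vertex from~$C_i$'' constraint via a clean local exchange argument rather than merely ``at least one''. Getting a single uniform gadget that works for all~$r \ge 2$ means the number of connector vertices between~$a_0$ and each~$C_i$ (and between~$C_i$ and its pendant tips) has to scale with~$r$, and one has to be careful that these connectors do not create unintended cheap ways to dominate the tips or shortcuts that spoil~$r$-independence; I would handle this by making every connector vertex a degree-two path-internal vertex with no private neighbours, so no dominating set ever prefers a connector over the ``real'' choice in~$C_i$. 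Once the gadget is pinned down, the equivalence proof is routine pendant-exchange bookkeeping, and since~$r$ is a constant the whole construction is polynomial with parameter blow-up~$|R| = O(k)$, establishing~$\W[2]$-hardness for every fixed~$r \ge 2$.
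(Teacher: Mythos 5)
There is a genuine gap, and it stems from a misreading of the parameter. The residual set is defined as $R := V(G) \setminus N[X]$ --- the complement of the \emph{distance-one} closed neighbourhood of $X$ --- not $V(G) \setminus N^r[X]$. In your construction $a_0$ is adjacent only to $C_0$ and to the $a_i$'s, so every vertex of $C_1 \cup \cdots \cup C_k$ (as well as every internal connector and pendant vertex at distance $\geq 2$ from $a_0$) lies in $R$. Hence $|R|$ is not bounded by any function of $k$, and the reduction is not a parametrised reduction, which is the whole point of the lemma. There is a second, related inconsistency: you want the pendant tips to sit at distance exactly $r+1$ from $a_0$ while $X = \{a_0\}$ is \emph{maximal} $r$-independent. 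Maximality of $X$ is equivalent to every vertex lying within distance $r$ of $X$, so a vertex at distance $r+1$ could be added to $X$ and maximality fails; you cannot have both. Finally, you never address how the non-selected vertices of each $C_i$ get dominated: a colourful solution only dominates $C_0$ and picks one vertex per class, so in the \textsc{Dominating Set} instance the rest of $C_i$ must be covered by something, and in your construction nothing covers it.

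The paper's construction repairs all three issues at once and needs none of your $r$-dependent gadgetry. It makes $a_0$ adjacent to \emph{all} of $C_0 \cup C_1 \cup \cdots \cup C_k$, so that $N[a_0]$ contains every vertex except the $a_i$'s and therefore $R = \{a_1,\ldots,a_k\}$ has size exactly $k$; it turns each $C_i$ ($i \geq 1$) into a clique, so that the single selected vertex of $C_i$ dominates all of $C_i \cup \{a_i\}$ (this also powers the reverse direction: each $a_i$ sees only $C_i$, forcing one budget unit per class, and an $a_i$ in the solution can be exchanged for any vertex of its clique); and it observes that the singleton $X = \{a_0\}$ is vacuously $r$-independent for every $r$, while every vertex lies within distance two of $a_0$, so the same construction is simultaneously a valid, maximal instance for all $r \geq 2$. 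Your instinct to scale connector paths with $r$ works against you here --- stretching distances is exactly what would blow up $R$ or break maximality.
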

\begin{proof}
  Let $I = (G,C_0,C_1,\ldots,C_k)$ be an instance of 
  \Problem{Colourful Dominating Set}. We construct an instance~$(G',X,k)$ of~\Problem{Dominating Set}
  above~$r$-independence as follows (\cf Figure~\ref{fig:reduction-3-is}):
  \begin{enumerate}
    \item Begin with~$G'$ equal to~$G$; then
    \item for each block $C_i$, $i \geq 1$, add edges to make $G[C_i]$ a complete graph
          and add an additional vertex~$a_i$ with neighbourhood~$C_i$; then
    \item add a vertex $a_0$ and connect it to all vertices in~$C_0 \cup C_1 \cup \ldots C_k$.
  \end{enumerate}
  Let $X=\{a_0\}$ be the $r$-independent set (which it clearly is for any~$r$)
  and thus $R = \{a_1, \ldots, a_k\}$. Note that the graph~$G'$ trivially has
  a dominating set of size $k+1$; the set $\{a_0, a_1, \ldots, a_k\}$. We now
  claim that $I$ has a colourful dominating set of size
  $k$ iff $G'$ has a dominating set of size $k$.

  Assume that $I$ has a solution of size~$k$ and fix one such colourful
  dominating set $D$. Note that~$D$ in~$G'$ a) dominates all of~$C_0$ (because
  it dominates~$C_0$ in~$G$) and b) dominates each set~$C_i \cup \{a_i\}$, $i \geq 1$
  (because~$D$ intersects each such $C_i$), and of course the vertex~$a_0$.
  Hence, $D$ is a dominating set of~$G'$ as well.

  In the other direction, assume that $D$ is a dominating set of~$G'$ of size $k$.
  Since each $a_i$, $i \geq 1$, is only connected to vertices
  in the corresponding set~$C_i$, at least one vertex from each set~$C_i \cup \{a_i\}$
  needs to be in~$D$. Since there are~$k$ such sets we conclude
  that~$D$ intersects each set~$C_i \cup \{a_i\}$ in precisely one vertex. We
  can further modify any such solution to not take the~$a_i$-vertices by taking
  an arbitrary vertex from~$C_i$ instead, thus assume that~$D$ has this form
  in the following. But then~$D$ is of course also a dominating set of size~$k$
  for~$G$, as  claimed.

\end{proof}

\subsection{Tractability in sparse graphs}

\noindent
In the following we present two positive results, namely that
\Problem{Dominating Set} above~$3$-independence is fpt in nowhere
dense classes and that it admits a polynomial kernel in bounded expansion
classes. The algorithm further implies an $\XP$ algorithm in general graphs.
The following annotated domination problem will occur as a subproblem:

\begin{problem}{Annotated Dominating Set}
  \Input & A graph~$G$, a subset~$Y \subseteq V(G)$, a
           collection of vertex sets~$R_1,\ldots,R_\ell$,
           and an integer~$k$. \\
  \Prob & Is there a set of size~$k$ that dominates~$V(G) \setminus Y$
          and contains at least one vertex in each~$R_i$?
\end{problem}

\noindent
In the following algorithm we will group vertices of~$N(X)$
according to their neighbourhood in~$R$ (or a subset of~$R$). We
will call those groups \emph{$R$-neighbourhood classes}.
We will write~$\gamma(G,Y,\{R_1,\ldots,R_\ell\})$ to denote the
size of an optimal solution of \Problem{Annotated Dominating Set}.

\begin{lemma}\label{lemma:above-3-fpt}
  \Problem{Dominating set} above~$3$-independence can be solved in linear fpt-time
  in any graph class of bounded expansion.
\end{lemma}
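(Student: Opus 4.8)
The plan is to reduce \Problem{Dominating Set} above~$3$-independence to a bounded number of instances of \Problem{Annotated Dominating Set} on graphs that are still from the bounded expansion class, and then to solve the latter by a combination of the Twin class lemma (Lemma~\ref{lemma:twin}) and first-order model checking (Proposition~\ref{prop:fo-check}). Since $X$ is a maximal $3$-independent set, every vertex of $R = V(G)\setminus N[X]$ lies at distance exactly~$2$ or~$3$ from $X$ and, crucially, every vertex of $R$ has a neighbour in $N^2(X)$ — the key extra structure that $r=3$ buys us over $r=2$. First I would observe that in any dominating set $D$ of size $p\in[|X|,|X|+|R|]$, each vertex $x\in X$ is either dominated by a vertex of $N[x]$ that is "private" to $x$, or by a vertex of $N(X)$ shared with $R$; and because the $x$'s are pairwise at distance~$\geq 4$, the closed neighbourhoods $N[x]$ are pairwise non-adjacent, so a single vertex of $D$ can dominate at most one $x\in X$. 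This forces $|D\cap N[X]|\geq |X|$ with equality only in very restricted configurations, and the "slack" $p-|X|\leq |R|$ is what we branch on.

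\textbf{Key steps.}
\emph{Step 1 (reduce to the core).} Guess, for each vertex $r\in R$, whether it is dominated "from inside" (by a vertex in $N^2(X)\cup R$ already forced) or serves as a dominator itself; since $|R|$ is the parameter this is at most $2^{|R|}$ branches, and in each branch we know which vertices of $R$ must be covered by the residual budget. \emph{Step 2 (contract $X$).} Replace each $x\in X$ together with any private dominator choices by an annotation: the remaining task is to dominate $N(X)\cup R$ (minus what is already handled) using a budget of $p-|X|+O(1)$ vertices, while being forced to pick, for each $x$, at least one vertex from a prescribed set $R_x\subseteq N[x]$ — this is exactly \Problem{Annotated Dominating Set} with $\ell=|X|$ sets $R_1,\dots,R_\ell$, a set $Y$ of "don't-care" vertices, and parameter $k = p-|X|+O(1) \le |R|+O(1)$. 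The graph here is an induced subgraph (with at most $O(|R|)$ added apex-like vertices from Step 1's gadgets), hence still of bounded expansion. \emph{Step 3 (solve \Problem{Annotated Dominating Set}).} Group vertices of $N(X)$ into $R$-neighbourhood classes; by Lemma~\ref{lemma:twin} applied to the bipartite graph between $N(X)$ and $R$, there are only $O(|R|)$ distinct neighbourhood classes and only $O(|R|)$ vertices of high degree into $R$, so the relevant structure is describable by a first-order formula of size $f(|R|)$. Then Proposition~\ref{prop:fo-check} decides, in linear fpt-time, whether the required size-$k$ annotated dominator exists. \emph{Step 4 (assemble).} Return \YES iff some branch succeeds; the overall running time is $2^{|R|}\cdot g(|R|)\cdot n$ for a suitable $g$, which is linear fpt-time.

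\textbf{Main obstacle.}
The hard part will be Step~2/Step~3: making precise the claim that the residual optimisation problem, after contracting $X$, is faithfully captured by a first-order formula whose size depends only on $|R|$. The subtlety is that \Problem{Annotated Dominating Set} in full generality is not FO-expressible (the budget $k$ and the sets $R_i$ are unbounded in $n$), so one must exploit Lemma~\ref{lemma:twin} to argue that only $O(|R|)$ "types" of $N(X)$-vertices matter and that within each type a bounded number of representatives suffice — a kernelisation-flavoured argument reducing the annotated instance to one of size $h(|R|)$, on which brute force (or a small FO sentence) finishes the job. Care is also needed to verify that the auxiliary vertices introduced when converting the "dominated-from-inside" guesses into annotations do not destroy bounded expansion; adding $O(|R|)$ vertices of arbitrary degree preserves bounded expansion since the grad of the result is bounded in terms of the original grad plus $O(|R|)$, and $|R|$ is the parameter, so this is safe.
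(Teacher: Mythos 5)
Your overall architecture (guess the interaction with $R$, use the twin class lemma to control the $R$-neighbourhood classes of $N(X)$, finish with FO model checking) matches the paper's, but there is a genuine gap at the step you yourself flag as the main obstacle, and the fix you sketch does not close it. After ``contracting'' $X$ you are left with a \emph{single} \textsc{Annotated Dominating Set} instance that must dominate $N(X)\cup R$ with $\ell=|X|$ annotation sets and a solution of size at least $|X|$: since $X$ is $3$-independent, the sets $N[x]$ are pairwise non-adjacent and each must contribute its own dominator. Both $\ell$ and the target solution size are therefore unbounded in the parameter $|R|$, so no first-order sentence of size $f(|R|)$ can express the existence of such a solution, and Proposition~\ref{prop:fo-check} does not apply to this global instance. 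Your proposed remedy --- a kernelisation-flavoured reduction of the instance to one of size $h(|R|)$ --- cannot work as stated either: keeping a bounded number of representatives per $R$-neighbourhood type does not help, because two vertices of the same type living in distinct $N[x]$, $N[x']$ cannot substitute for one another (each of $N(x)$ and $N(x')$ still needs to be dominated locally). Any such shrinking argument needs an additive offset together with a proof that discarding all but boundedly many of the $|X|$ local pieces preserves the optimum up to that offset; nothing in your sketch supplies this, and the paper in fact requires a separate characteristic-vector argument of exactly this kind for its \emph{kernel} (Lemma~\ref{lemma:prot-class-replace}), not for the FPT algorithm.

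The paper sidesteps the problem differently: after guessing $D\cap R$ and the subset of covered $R'$-neighbourhood classes ($2^{O(|R|)}$ branches by Lemma~\ref{lemma:twin}), it runs a dynamic program over the ordering $x_1,\dots,x_\ell$ of $X$ whose states are the $2^{O(|R|)}$ subsets of neighbourhood classes covered so far. Each transition solves \textsc{Annotated Dominating Set} only on the single graph $H_i=G[N[x_i]]$, where the centre $x_i$ dominates everything locally, so the local optimum has size $O(|R|)$ and \emph{is} FO-expressible by a formula of size $O(|R|)$. The unbounded quantity $|X|$ is absorbed into the length of the DP rather than into any formula or guessed object. To repair your argument you would need to replace your Steps~2--3 by such a sequential composition over $X$ (or by a fully worked-out offset-based replacement argument); as written, the reduction to a single FO-checkable instance fails.
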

\begin{proof}
  First we guess the intersection~$D_R$ of an optimal solution (should it exist)
  with~$R$ in~$O(2^{|R|})$ time. Let~$R' \subseteq R$ be those vertices of~$R$
  that are not dominated by~$D_R$. Define
  \[
    \mathcal R := \{N(v) \cap R' \}_{v \in N(X)}
  \]
  as the neighbourhoods induced in~$R'$ by vertices in~$N(X)$. By
  Lemma~\ref{lemma:twin}, we have that~$|\mathcal R| = O(|R'|)$ since~$G$
  is from a class with bounded expansion (note that the partition into such
  neighbourhoods is computable in linear time using the partition-refinement
  data structure~\cite{PartitionRefinement}). Accordingly, in time~$O(2^{|\mathcal R|})
  = 2^{O(|R'|)}$, we can guess a subset~$\mathcal R' \subseteq \mathcal R$ such that
  an optimal solution covers exactly the neighbourhoods~$\mathcal R'$ (if~$\mathcal R'$
  does not cover~$R'$ we abort this branch of the computation). We are now left
  with the task of choosing vertices from~$N[X]$ to a) cover the
  neighbourhoods~$\mathcal R'$ and b) dominate the vertices in~$N[X]$ which are not
  dominated by~$D_R$.

  Let us introduce the following notation to ease our task: for a collection
  of $R$-neighbourhoods $\mathcal S \subseteq \mathcal R'$ and a set~$Y
  \subseteq N(X)$, let~$\mathcal S^{-1}(Y) := \{ Y_i \subseteq Y \mid N(y) =
  S~\text{for all}~y \in Y_i \}_{S \in \mathcal S}$. That is, $\mathcal
  S^{-1}(Y)$ contains those $R'$-neighbourhood classes in~$Y$ whose neighbourhood
  is contained in~$\mathcal S$.
  Let~$x_1,\ldots,x_\ell$ be an ordering of~$X$ and let~$H_i := G[N[x_i]]$; we
  will describe a dynamic-programming algorithm over the
  ordering~$x_1,\ldots,x_\ell$. Let~$T_i[\mathcal S]$ be the minimum size of a partial
  solution in $N[\{x_1,\ldots,x_i\}]$ that covers the neighbourhoods~$\mathcal S \subseteq \mathcal R'$
  and together with~$D_R$ dominates all of~$G[\bigcup_{1\leq j \leq i}
  N[x_j]]$. We initialize~$T_0[\mathcal S] := \infty$ for all~$\emptyset \neq
  \mathcal S \subseteq \mathcal R'$ and~$T_0[\emptyset] := 0$, then compute
  the following entries with the recurrence\footnote{A proof for the correctness 
  of the recurrence can be found in the Appendix}
  \[
    T_{i+1}[\mathcal S] :=
      \min_{\mathcal S_1 \cup \mathcal S_2 = \mathcal S}
      \Big( T_i[\mathcal S_1] + \gamma\big(H_{i+1}, N(D_R) \cap N[x_{i+1}], \mathcal S_2^{-1}(N(x_{i+1}))  \big) \Big).
  \]
  Note that~$\gamma(\ldots)$ is the minimum size of a set that
  dominates~$N[x_i] \setminus N(D_R)$ while choosing at least one vertex from
  each member of~$S^{-1}_2(N(x_i))$ (if $\emptyset \in S^{-1}_2(N(x_i)$ we assume that
  $\gamma(\ldots) = \infty$). The latter constraint corresponds to
  dominating the neighbourhoods of~$\mathcal S_2$ in~$R'$ by using vertices
  from~$N[x_i]$. Once the DP table~$T_\ell$ has been computed, the size of an
  optimal solution is the value in~$T_\ell[\mathcal R']$.

  It remains to be noted that every neighbourhood graph $H_i$ admits a
  dominating set of size one, hence the annotated dominating set has size at
  most~$|\mathcal S_2|+1 = O(|R|)$. Thus, the problem of finding an annotated
  dominating set is FO-expressible by a formula of size~$O(|R|)$ and we can
  solve the subproblem of computing~$\gamma(\ldots)$ in time~$O(f(|R|) \cdot
  |N[x_i]|)$ for some function~$f$ using Proposition~\ref{prop:fo-check}.
  As a result, we obtain a linear dependence on the
  input size (note that~$|E| = O(|V|)$) in the running time and thus the
  problem is solvable in linear fpt-time.\looseness-1
\end{proof}

\noindent
The same proof works for nowhere dense classes by applying
Proposition~\ref{prop:fo-check-nowhere} instead of Proposition~\ref{prop:fo-check}:

\begin{corollary}
  \Problem{Dominating set} above~$3$-independence can be solved in almost linear fpt-time
  in any nowhere dense class.
\end{corollary}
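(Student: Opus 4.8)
The plan is to transcribe the proof of Lemma~\ref{lemma:above-3-fpt} almost verbatim, isolating the two points at which the bounded-expansion hypothesis was genuinely used: (i) the bound $|\mathcal R| = O(|R'|)$ on the number of $R$-neighbourhood classes, obtained from the Twin class lemma (Lemma~\ref{lemma:twin}), and (ii) the use of Proposition~\ref{prop:fo-check} to evaluate each \Problem{Annotated Dominating Set} subproblem $\gamma(\ldots)$. For (ii) we simply substitute Proposition~\ref{prop:fo-check-nowhere}: each such call now costs $O(g(|R|)\cdot|N[x_i]|^{1+o(1)})$ instead of $O(g(|R|)\cdot|N[x_i]|)$, and the dynamic-programming recurrence, being syntactically identical, retains its correctness.

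For (i), we apply Lemma~\ref{lemma:twin} to the bipartite graph between $R'$ (playing the role of the side ``$X$'' of the lemma) and $N(X)$. In a nowhere dense class this gives only $\tau = f(|R'|) = O(|R'|^{o(1)})$ rather than a constant, so the bound reads $|\mathcal R| \le (\min\{4^\tau, \omega(e\tau)^\omega\} + 2\tau)\cdot|R'|$. Since $\omega$ is a fixed constant of the class and $\tau = O(|R'|^{o(1)})$, the second branch of the minimum equals $\omega(e\tau)^\omega = O(|R'|^{o(1)})$, whence $|\mathcal R| = O(|R'|^{1+o(1)}) = O(|R|^{1+o(1)})$ (the partition of $N(X)$ into $R$-neighbourhood classes is still produced by partition refinement, independently of the class). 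Consequently the two guessing loops cost $2^{|R|}$, for the choice of $D_R$, times $2^{|\mathcal R|} = 2^{O(|R|^{1+o(1)})}$, for the choice of $\mathcal R'$ --- still a function of the parameter $|R|$ alone --- and each stage of the DP ranges over $2^{O(|R|^{1+o(1)})}$ table entries, each costing $2^{O(|R|^{1+o(1)})}$ more time for the $\min$ over splits plus one call to $\gamma(\ldots)$.

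It then remains to sum over the $\ell \le |X|$ stages. Exactly as in Lemma~\ref{lemma:above-3-fpt}, every neighbourhood graph $H_i$ has a dominating set of size one, so each annotated dominating set in question has size $O(|R|)$ and is expressible by an FO formula of size $O(|R|)$; Proposition~\ref{prop:fo-check-nowhere} then evaluates it in time $O(g(|R|)\cdot|N[x_i]|^{1+o(1)})$. Because nowhere dense classes satisfy $|E(G)| = |V(G)|^{1+o(1)}$, we have $\sum_i |N[x_i]| = O(|E|+|V|) = n^{1+o(1)}$, and hence $\sum_i |N[x_i]|^{1+o(1)} = n^{1+o(1)}$ as well. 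Multiplying by the parameter-only factors from the two guessing loops and the per-entry bookkeeping yields total running time $h(|R|)\cdot n^{1+o(1)}$, \ie almost linear fpt-time.

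The only care needed --- and the one place I would double-check --- is in the accounting: the parameter dependence is now of the form $2^{|R|^{1+o(1)}}$ rather than single-exponential in $|R|$, and the $n$-dependence accumulates a bounded number of independent $n^{1+o(1)}$ overheads. Neither causes a problem --- finitely many $o(1)$ exponents still compose to an $o(1)$ exponent, and $2^{|R|^{1+o(1)}}$ is a legitimate parameter function --- but the write-up should state this explicitly. Everything else is a direct copy of the proof of Lemma~\ref{lemma:above-3-fpt}.
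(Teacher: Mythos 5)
Your proposal is correct and follows exactly the route the paper takes, which is simply to rerun the proof of Lemma~\ref{lemma:above-3-fpt} with Proposition~\ref{prop:fo-check-nowhere} substituted for Proposition~\ref{prop:fo-check}. In fact you are more careful than the paper's one-line justification: you also recheck the Twin class lemma, observing that $\tau$ becomes $O(|R'|^{o(1)})$ rather than a constant so that $|\mathcal R| = O(|R|^{1+o(1)})$ remains a function of the parameter alone, which is a point the paper leaves implicit.
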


\noindent
We finally note that the algorithm described in the proof of
Lemma~\ref{lemma:above-3-fpt} only needs a black-box fpt-algorithm for
\Problem{Annotated Dominating Set} to run in fpt time, thus it is very likely
that \Problem{Dominating set} above~$3$-independence is in $\FPT$ for other
highly structured but not necessarily sparse graph classes. We can run the
same algorithm on general graphs to obtain an $\XP$-algorithm using the
bound~$|\mathcal R| \leq 2^{|R'|}$ and a simple brute-force in~$\XP$-time
on the \Problem{Annotated Dominating Set} subinstance during the DP.

\begin{corollary}
  \Problem{Dominating set} above~$3$-independence is in $\XP$.
\end{corollary}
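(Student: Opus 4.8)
The plan is to essentially re-run the algorithm from Lemma~\ref{lemma:above-3-fpt}, keeping the overall structure intact but replacing the two places where bounded expansion was used with brute-force $\XP$-time arguments. The algorithm has three nested guessing phases followed by a dynamic program over an ordering of $X$, so I would go through each and bound its cost by a term of the form $n^{g(|R|)}$.

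First I would handle the guessing of $D_R \subseteq R$: this is $O(2^{|R|})$ as before and is already $\XP$-time (indeed $\FPT$-time). Next comes the guess of $\mathcal R' \subseteq \mathcal R$, the collection of $R'$-neighbourhood classes to be covered. In the bounded-expansion setting Lemma~\ref{lemma:twin} gave $|\mathcal R| = O(|R'|)$; in general graphs we only have the trivial bound $|\mathcal R| \leq 2^{|R'|}$, so enumerating subsets of $\mathcal R$ costs $2^{|\mathcal R|} \leq 2^{2^{|R|}}$, still a function of the parameter only. What changes is the per-subproblem cost: the DP recurrence
\[
  T_{i+1}[\mathcal S] :=
    \min_{\mathcal S_1 \cup \mathcal S_2 = \mathcal S}
    \Big( T_i[\mathcal S_1] + \gamma\big(H_{i+1}, N(D_R) \cap N[x_{i+1}], \mathcal S_2^{-1}(N(x_{i+1}))\big) \Big)
\]
has a table indexed by subsets $\mathcal S \subseteq \mathcal R'$, so its size and the number of split-pairs $(\mathcal S_1,\mathcal S_2)$ are again bounded in terms of $|R|$ only, which is fine.

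The one genuinely new ingredient is computing $\gamma(\ldots)$, the \Problem{Annotated Dominating Set} value on $H_{i+1} = G[N[x_{i+1}]]$. We can no longer call Proposition~\ref{prop:fo-check}, but we do still have the observation made at the end of Lemma~\ref{lemma:above-3-fpt}: every neighbourhood graph $H_i$ has a dominating set of size one, so the annotated solution we seek has size at most $|\mathcal S_2| + 1 = O(|R|)$ — in fact at most $|\mathcal R'| + 1 \leq 2^{|R|}+1$. Hence we can find it by brute force: enumerate all subsets of $V(H_{i+1})$ of size at most $2^{|R|}+1$, of which there are $|N[x_{i+1}]|^{2^{|R|}+1} = n^{O(2^{|R|})}$, and for each check in polynomial time whether it dominates $N[x_{i+1}] \setminus N(D_R)$ and meets every requirement in $\mathcal S_2^{-1}(N(x_{i+1}))$. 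Summing over the $|X| \leq n$ cells $x_i$, the total time is $f(|R|) \cdot n^{O(2^{|R|})}$, i.e.\ $\XP$-time parametrised by $|R|$. The $R$-neighbourhood classes themselves are computed in polynomial time by partition refinement exactly as before, no bounded-expansion hypothesis needed.

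The main obstacle — though it is more a matter of careful bookkeeping than a real difficulty — is making sure that the substitution of brute force for the FO-model-checking step leaves the correctness proof of the recurrence (referenced in the appendix) untouched; since $\gamma(\ldots)$ is defined purely combinatorially and we are merely computing the same quantity by a different method, correctness is inherited verbatim, and only the running-time analysis changes. One should also double-check that the bound $|\mathcal R| \leq 2^{|R'|}$ is used consistently in place of the Twin Class Lemma so that no step secretly relies on sparseness. With that, the stated $\XP$ membership follows.
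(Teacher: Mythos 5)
Your proposal is correct and follows essentially the same route as the paper, which likewise obtains the $\XP$ algorithm by rerunning the DP of Lemma~\ref{lemma:above-3-fpt} with the trivial bound $|\mathcal R| \leq 2^{|R'|}$ in place of the Twin Class Lemma and a brute-force enumeration (using the size bound $|\mathcal S_2|+1$ on annotated solutions) in place of the FO model-checking step. Your write-up just spells out the running-time bookkeeping that the paper leaves implicit.
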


\subsection{Kernelization in sparse graphs}

\noindent
Let us now set up the necessary machinery for the kernelization.
A \emph{boundaried graph}~$\tbnd G$ is a tuple~$(G,R)$ where~$G$ is a graph and~$R
\subseteq V(G)$ is the \emph{boundary}. We also write~$\partial \tbnd G$ to denote
the boundary. For a graph~$G$ and an induced
subgraph $H$, the boundary~$\partial_G H \subseteq V(H)$ are those vertices of~$H$ that
have neighbours in~$V(G)\setminus V(H)$. Thus for every subgraph~$H$ of~$G$
there is a naturally associated boundaried graph~$\bnd H = (H,\partial_G H)$.

For a boundaried graph~$\tbnd H$ and subsets~$A,B \subseteq \partial \tbnd H$
a set~$D \subseteq V(\tbnd H)$ is an~$(A,B)$-dominator of~$\tbnd H$ if~$D \cap
\partial \tbnd H = A$ and~$D$ dominates the set~$(V(\tbnd H) \setminus
\partial \tbnd H) \cup B$. We let~$\ds(\tbnd H, A, B)$ denote the size of a minimum
$(A,B)$-dominator of~$\tbnd H$.
A \emph{replacement} for~$H$ is a boundaried graph $(H',B)$ with~$H[B]
= H'[B]$. The operation of \emph{replacing~$H$ by~$H'$ in~$G$}, written
as~$G[H \to H']$, consist of removing the vertices~$V(H)\setminus B$ from~$G$,
then adding~$H'$ to~$G$ with~$B$ in~$H'$ identified with~$B$ in~$G$ (we assume
that the vertices~$V(H')\setminus B$ do not occur in~$G$).

\begin{lemma}\label{lemma:prot-replace}
  Let~$(G,X,p)$ be an instance of~\Problem{Dominating Set} above~$3$-independence
  where~$G$ is from a bounded expansion class.
  Let~$x \in X$, $H_x = G[N^2[x]]$ and~$R' = N^2[x] \cap R$.
  Then, in fpt-time with parameter~$|R'|$, we can compute a replacement~$H'_x$
  for~$H_x$ of size~$O(4^{|R'|}|R'|)$ such that
  $
    \ds(G[H_x \to H'_x]) = \ds(G)
  $.
  Moreover, the replacement~$H'_x$ is a subgraph of~$H_x$ and contains~$x$.
\end{lemma}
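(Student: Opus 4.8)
The plan is to follow the classical protrusion-replacement strategy, but adapted to the $(A,B)$-dominator signature that is tailored to the residual set~$R'$. First I would fix the local graph~$H_x = G[N^2[x]]$ and observe that its boundary $\bnd H_x$ towards the rest of~$G$ is contained in~$R'$: since $X$ is $3$-independent, the ball $N^2[x]$ cannot touch $N^2[x']$ for any other $x' \in X$ except possibly through~$R$, so the only vertices of $H_x$ with neighbours outside $H_x$ lie in $N^2[x] \cap R = R'$ (together with, possibly, vertices of $N(X)$ adjacent to a residual vertex outside the ball — one has to check carefully that those are still captured, or enlarge the boundary by a bounded amount). This gives a boundary of size $O(|R'|)$, which is the parameter we are allowed to be exponential in.

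The next step is to define the \emph{signature} of~$H_x$: for every pair $(A,B)$ with $A, B \subseteq \bnd H_x$ record the value $\ds(\tbnd H_x, A, B)$, the minimum size of a set that meets the boundary in exactly~$A$ and dominates $(V(H_x)\setminus\bnd H_x) \cup B$. There are $4^{|\bnd H_x|} = 2^{O(|R'|)}$ such pairs, and each value can be computed in fpt-time since $H_x$ itself has a dominating set of size one (it is a single ball of radius two around~$x$), so each $\ds(\tbnd H_x, A, B)$ is an \textsc{Annotated Dominating Set} instance with solution size $O(|R'|)$, solvable by Proposition~\ref{prop:fo-check}. I would then argue that this signature is all that matters for the global domination number: $\ds(G)$ only depends on $H_x$ through its signature, because any global dominating set restricted to $H_x$ realises some $(A,B)$ (with $B$ the set of boundary vertices it chooses to dominate internally rather than from outside), and conversely the pieces glue back together.

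To obtain the replacement $H'_x$ I would then \textbf{reduce $H_x$ to a subgraph that preserves the signature}. The key observation is that $H_x$ is a union of at most $|\bnd H_x|+1$ dominating balls, so it behaves like a bounded-treedepth-modulator object, and standard counting on bounded-expansion classes (Lemma~\ref{lemma:twin}, the twin-class lemma) bounds the number of distinct neighbourhood types among $N(x) \cap N(X)$-vertices towards the boundary $R'$ by $O(4^{|R'|} |R'|)$. Keeping one representative vertex of $N(x)$ per $R'$-neighbourhood class (plus $x$ itself, plus all of $R'$, plus a bounded number of witness vertices certifying each achievable signature value), and deleting the rest, yields a subgraph $H'_x \subseteq H_x$ of the claimed size $O(4^{|R'|} |R'|)$ that still realises every $(A,B)$-value; deleting a vertex can only increase each $\ds(\tbnd H_x,A,B)$, and the retained representatives and witnesses ensure it does not. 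Since $H'_x \subseteq H_x$ and $x$ is retained, the final two claims hold, and signature-preservation gives $\ds(G[H_x\to H'_x]) = \ds(G)$.

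The main obstacle I expect is \textbf{pinning down the boundary correctly} and proving the gluing/signature-invariance lemma cleanly: one has to be careful that $R'$-vertices may be dominated from \emph{outside} $H_x$ (by another ball, or by a residual vertex), which is exactly why the signature must range over all $B \subseteq \bnd H_x$ rather than insisting the boundary be internally dominated; and one must verify that the interaction between different balls $H_x$, $H_{x'}$ only happens through $R$, so that replacements at different $x$ can be performed independently without a telescoping loss. Showing that the retained witness vertices are bounded in number — i.e.\ that each of the $2^{O(|R'|)}$ signature entries is witnessed by an $(A,B)$-dominator of size $O(|R'|)$ meeting only $O(|R'|)$ vertices of $N(x)$ — is the routine-but-delicate counting step, handled by the twin-class lemma.
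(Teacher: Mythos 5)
Your overall strategy is the same as the paper's: record the signature $\ds(\tbnd H_x,A,B)$ for all $A,B\subseteq R'$, compute a small witness $(A,B)$-dominator for each entry via Proposition~\ref{prop:fo-check} (using that $x$ alone dominates $V(H_x)\setminus R$, so each witness has size $O(|R'|)$), keep the union $S$ of all witnesses together with $x$ and $R'$, and then collapse the remaining vertices by neighbourhood classes, bounding everything with the twin-class lemma. The boundary identification is also fine --- in fact your hedge is unnecessary, since every neighbour of a vertex in $N(x)$ lies in $N^2[x]$, so $\bnd H_x\subseteq R'$ exactly.

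However, there is a genuine gap in the step where you argue the reduced graph preserves the signature. You keep \emph{one} representative per neighbourhood class and justify correctness by saying that ``deleting a vertex can only increase each $\ds(\tbnd H_x,A,B)$, and the retained witnesses ensure it does not.'' The witnesses indeed give $\ds(\tbnd H'_x,A,B)\le\ds(\tbnd H_x,A,B)$, but the converse inequality is the hard direction and is false for your construction: deleting vertices also deletes \emph{domination obligations}, so $\ds$ can strictly \emph{decrease}. Concretely, let $Y_i$ be a class of vertices in $N(x)\setminus S$ with identical neighbourhoods, and let $y$ be your single retained representative. An $(A,B)$-dominator $D'$ of $H'_x$ may contain $y$ itself (for instance because $y$ usefully dominates part of $B\subseteq R'$); then $y$ is dominated ``by itself'' and nothing in $D'$ need be adjacent to the deleted twins of $y$, so $D'$ fails to dominate $Y_i$ in $H_x$ and does not lift. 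The paper closes exactly this hole by (i) first deleting all edges with no endpoint in $S$, so that the leftover set $Y$ is independent and each $Y$-vertex is dominated only from $S$, (ii) keeping \emph{two} representatives per $S$-twin class, and (iii) arguing that a minimal dominator either hits each class at most once --- in which case the second representative forces some $s\in S$ adjacent to the whole class into $D'$ --- or contains a whole class, in which case one of the two vertices can be swapped for $x$. A secondary issue: you form classes by neighbourhoods in $R'$ only, whereas the classes must be taken with respect to the full witness set $S$ (two vertices agreeing on $R'$ but not on $S\setminus R'$ cannot be merged, since whether a witness dominates them depends on their $S$-neighbourhood). Without these repairs the claimed equality $\ds(G[H_x\to H'_x])=\ds(G)$ does not follow.
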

\begin{proof}
  For every pair of subsets~$A,B \in R'$ we compute
  a minimal~$(A,B)$-dominator~$S_{A,B}$ for~$\tbnd H_x = (H_x, R')$.
  Since this problem is expressible by an FO-formula of size~$O(|R'|)$, we can
  employ Proposition~\ref{prop:fo-check} to compute the set~$S_{A,B}$
  in linear fpt-time with parameter~$|R'|$. Note that $S_{A,B}$ will,
  besides the vertices in $A$, contain at most $|B| + 1$ additional
  vertices, since $|B|$ vertices suffice to dominate $B$ and the
  vertex $x$ dominates all of~$V(H_x) \setminus R$.

  Let~$S = \bigcup_{A,B \subseteq R'} S_{A,B} \cup R' \cup \{x\}$ be the union of all
  such computed solutions and the boundary. By construction, $|S| \leq
  (4^{|R'|}+1) |R'| + 1$. Since all~$(A,B)$-dominators live inside~$S$, we can
  safely remove the edges from~$H_x$ that do not have any endpoint in~$S$,
  call the resulting graph~$\tilde H_x$. Let~$Y := V(\tilde H_x) \setminus S$
  and let~$Y_1,\ldots,Y_t$ be a partition of~$Y$ into~$S$-twin classes
  (thus all vertices in~$Y_i$ have the same neighbourhood in~$S$ and no other
   class has this neighbourhood). Note that~$Y$ is an independent set in~$\tilde H_x$.

  Let~$H'_x$ be obtained from~$\tilde H_x$ by removing all but two
  representatives from every twin-class, denote those by~$Y'_i \subseteq Y_i$.
  Clearly, every~$(A,B)$-dominator of~$\tilde H_x$ is still
  an~$(A,B)$-dominator of~$H'_x$, we need to proof the other direction.
  Let~$D'$ be an~$(A,B)$-dominator of~$H'$. If~$D'\subseteq S$ then~$D'$ is
  still an $(A,B)$-dominator for~$H$. The same holds if~$D'$ intersects every
  twin-class in at most one vertex: each such class either has size one, in
  which case it has size one in~$\tilde H_x$ as well, or it has size two and
  the vertex not contained in~$D'$ is dominated by a vertex in~$S$. In either
  case, $D'$ dominates all of~$Y$ and thus is an~$(A,B)$-dominator of~$\tilde
  H_x$ and therefore of~$H_x$. Thus, assume that~$D'$ fully contains some class~$Y'_i$.
  Clearly, only one vertex of~$Y'_i$ is enough to dominate~$N(Y'_i)$ (and
  potentially vertices in~$B$), thus we can modify~$D'$ by picking the central
  vertex~$x$ instead to dominate the other vertex of~$Y'_i$ (which of course
  also dominates all of~$Y_i$ in~$\tilde H_x$). This can, of course, only happen
  once, otherwise we would reduce the size of the supposedly minimal set~$D'$.
  This leads us back to the previous case and we conclude that there exists
  an~$(A,B)$-dominator of~$\tilde H_x$ and thus~$H_x$ of equal size.

  We conclude that~$\ds(\tbnd H_x, A, B) = \ds(\tbnd H'_x, A, B)$ for every
  choice of~$A, B \subseteq R'$; which implies that 
  $
    \ds(G[H_x \to H'_x]) = \ds(G)
  $.
  Finally, by the twin-class lemma, $|H'_x| = O(|S|) = O(4^{|R'|}|R'|)$
  since~$\tau$ is a constant in bounded expansion classes.
\end{proof}

\begin{lemma}\label{lemma:prot-class-replace}
  Let~$(G,X,p)$ be an instance of~\Problem{Dominating Set} above~$3$-independence
  where~$G$ is from a bounded expansion class.
  Let~$R' \subseteq R$ be a subset and let~$X' \subseteq X$ be those vertices
  $x \in X$ with~$N^2(x) \cap R = R'$. Let~$H$ be the induced subgraph on
  $R' \cup \bigcup_{x \in X'} N[x]$.
  Assuming $X'$ is not empty we can, in fpt-time with parameter~$|R'|$,
  compute a replacement~$H'$ for~$H$ of size~$O(|R'|^{2|R'|+2} 4^{|R'|})$
  alongside an offset~$c$ such that
  $
    \ds(G[H' \to H]) = \ds(G) - c
  $.
  Moreover, the replacement~$H'$ is a subgraph of~$H$.
\end{lemma}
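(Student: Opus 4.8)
The plan is to shrink each ``ball'' around a vertex of~$X'$, then to show the shrunken balls realise only a bounded number of ``domination types'' relative to~$R'$, and finally to keep all but a bounded number of representatives of each type, folding the discarded balls into the offset~$c$. First note that, since~$X$ is $3$-independent, the closed neighbourhoods~$\{N[x]\}_{x\in X}$ are pairwise non-adjacent; moreover every vertex at distance exactly~$2$ from some~$x\in X$ lies in~$R$ (a closer vertex of~$X$ would contradict $3$-independence), so~$N^2[x]=N[x]\cup R'$ for every~$x\in X'$ and therefore $H=\bigoplus_{x\in X'}\tbnd H_x$ is the gluing, along the common boundary~$R'$, of the boundaried graphs $\tbnd H_x=(G[N^2[x]],R')$. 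Applying Lemma~\ref{lemma:prot-replace} to each~$x\in X'$ in turn (the replacements pairwise interact only through~$R'$, which each one preserves together with its induced subgraph) we may assume each~$\tbnd H_x$ has~$O(4^{|R'|}|R'|)$ vertices while~$\ds(G)$ and every characteristic~$\ds(\tbnd H_x,A,B)$ is unchanged. The point of this step is that $\ds(\tbnd H_x,A,\emptyset)=|A|+1$ for every~$A\subseteq R'$ --- witnessed in the original~$G$ by~$A\cup\{x\}$ since~$x$ dominates~$N[x]$, and preserved by Lemma~\ref{lemma:prot-replace} --- i.e.\ every ball carries the same unit ``default cost''.

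For boundaried graphs glued along~$R'$, domination decomposes over the boundary: $\ds(\tbnd F_1\oplus\tbnd F_2)=\min\{\ds(\tbnd F_1,A,B_1)+\ds(\tbnd F_2,A,B_2)-|A|\}$, the minimum over $A,B_1,B_2\subseteq R'$ with $A\cup B_1\cup B_2=R'$ (each side dominates its own private vertices and the boundary is covered jointly). Iterating, and using that every ball has default cost~$1$, one obtains
\[
  \ds(\tbnd H,A,B)=|A|+|X'|+\nu(A,B),
\]
where $\nu(A,B)$ is the minimum of $\sum_{x\in X'}\mu_x(A,B_x)$ over families $(B_x)_{x\in X'}$ of subsets of~$R'$ with $\bigcup_{x}B_x\supseteq B\setminus N[A]$, and $\mu_x(A,C):=\ds(\tbnd H_x,A,C)-|A|-1\ge 0$ is the marginal cost at~$x$ of additionally dominating~$C$. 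Crucially, a minimiser for~$\nu(A,B)$ uses at most~$2|R'|$ balls with~$B_x\neq\emptyset$: at most~$|R'|$ of them have~$\mu_x>0$ (each contributes~$\ge 1$ and the total is at most~$|R'|$, since one extra vertex inside one ball dominates any single vertex of~$R'$), and the remaining ``free'' ones may be taken with pairwise disjoint~$B_x$'s.

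Now bucket~$X'$ by the \emph{type} of~$x$, namely its characteristic~$(A,B)\mapsto\ds(\tbnd H_x,A,B)$ (equivalently, the family of marginals~$\mu_x(A,\cdot)$). The number of distinct types is bounded in terms of~$|R'|$: each~$\tbnd H_x$ is small by the first step and --- the key extra input --- the neighbourhoods $N(v)\cap R'$, over all $v\in\bigcup_{x\in X'}N(x)$, take only~$O(|R'|)$ distinct values; this follows by applying the twin-class lemma (Lemma~\ref{lemma:twin}) to the bipartite graph between~$R'$ and~$\bigcup_{x\in X'}N(x)$, which is a subgraph of~$G$ and hence of bounded expansion, so that~$\tau$ may be taken constant. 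A careful count along these lines gives~$O(|R'|^{2|R'|})$ types, and the type of a given~$x$ is computable in fpt-time because each characteristic entry is FO-expressible by a formula of size~$O(|R'|)$, so Proposition~\ref{prop:fo-check} applies. From each bucket (of size~$|X'|_t$, say) we keep~$\min\{|X'|_t,2|R'|\}$ balls and discard the rest; let~$H'$ be the gluing along~$R'$ of the kept balls --- a subgraph of~$H$ introducing no new edges --- and let~$c$ be the number of discarded balls, so~$c\ge 0$. Then $|V(H')|=O(|R'|^{2|R'|}\cdot 2|R'|\cdot 4^{|R'|}|R'|)=O(|R'|^{2|R'|+2}4^{|R'|})$, as claimed.

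For correctness, balls of the same type have identical characteristics, hence identical marginal functions, so --- since any single query~$(A,B)$ activates at most~$2|R'|$ balls with~$B_x\neq\emptyset$ --- keeping~$2|R'|$ representatives of each type leaves the set-cover value unchanged: the minimisation defining~$\nu(A,B)$ has the same value when restricted to the kept balls. Hence $\ds(\tbnd H',A,B)=|A|+(|X'|-c)+\nu(A,B)=\ds(\tbnd H,A,B)-c$ for all~$A,B\subseteq R'$, and substituting this into the gluing identity with~$\bar H:=G-(V(H)\setminus R')$ (noting $G=\bar H\oplus\tbnd H$ and $G[H\to H']=\bar H\oplus\tbnd H'$) yields $\ds(G[H\to H'])=\ds(G)-c$. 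The main obstacle is the type bound of the third paragraph: the type must be coarse enough to take only~$O(|R'|^{2|R'|})$ values --- this is exactly where both the bounded size of the reduced balls and the twin-class bound on distinct~$R'$-neighbourhoods enter --- yet fine enough that same-type balls are genuinely interchangeable in the boundary decomposition; a secondary nuisance is the bookkeeping for~$\nu$ caused by balls whose marginal cost is zero because the relevant vertices of~$R'$ are already dominated from~$A$.
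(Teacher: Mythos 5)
Your proposal follows essentially the same route as the paper's proof: first shrink each ball $G[N^2[x]]$ via Lemma~\ref{lemma:prot-replace}, then classify the balls by their characteristic $(A,B)\mapsto\ds(\tbnd H_x,A,B)$ for $A,B\subseteq R'$, keep $O(|R'|)$ representatives per class, and fold the discarded balls into the offset~$c$ --- your gluing/marginal-cost formalism is just a more explicit rendering of the paper's exchange argument that at most $|R'|$ balls per class need to be \emph{interesting} in some optimal solution. The one soft spot, namely that the number of distinct characteristics is $|R'|^{O(|R'|)}$ rather than the naive $(2|R'|+3)^{4^{|R'|}}$, is asserted in your write-up in the same way it is asserted in the paper, so the two arguments stand or fall together.
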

\begin{proof}
  Let~$X' := \{ x_1, \ldots, x_\ell \}$ and define the graphs~$H_i :=
  G[N^2[x_i]]$ for~$1 \leq i \leq \ell$. We first we apply
  Lemma~\ref{lemma:prot-replace} to replace every subgraph~$H_i$ by a subgraph~$H'_i$ of
  size~$O(4^{|R'|}|R'|)$ in linear fpt-time with parameter~$R'$. For
  simplicity, let us call the resulting graph~$G$ and relabel the
  graphs~$H'_i$ to~$H_i$ (note that Lemma~\ref{lemma:prot-replace} ensures
  that dominating set size does not change and that~$X$ is still a
  $3$-independent set of the resulting graph).

  For every~$x_i \in X'$ we compute a
  \emph{characteristic vector}~$\chi_i$ indexed by pairs of subsets of~$R'$
  with the following semantic: for~$A,B \subseteq R'$ we set~$\chi_i[A,B] = \ds(\tbnd H_i, A, B)$.
  If~$\chi_i[A,B]$ is larger than~$|A|+|B|+1$ we simply set~$\chi_i[A,B] = \infty$.

  Note that the constraint subproblem to compute~$\chi_i[A,B]$ is FO-expressible,
  by a formula of size~$O(|R'|)$, thus we can compute the vectors~$\chi_i$ for~$1 \leq i \leq \ell$
  in linear time fpt-time with parameter~$O(|R'|)$ using Proposition~\ref{prop:fo-check}.
  Let~$\equiv_\chi$ be the equivalence relation over the graphs~$H_i$ defined as
  $
    H_i \equiv_\chi H_j \iff \chi_i = \chi_j
  $
  and let~$\mathcal H := \{H_i\}_{1\leq i\leq\ell} / \equiv_\chi$ be the corresponding
  partition into equivalence-classes under~$\equiv_\chi$. Note
  that~$|\mathcal H| \leq (|R'|+1)^{2|R'|}$ since that is the number of possible
  characteristic vectors.

  The construction of~$H'$ is now simple: in every equivalence-class~$\mathcal
  C \in \mathcal H$ we select~$\min\{|\mathcal C|, |R|\}$ subgraphs and remove
  the rest; clearly~$H'$ is a subgraph of~$H$ of the claimed size. We let the
  offset~$c$ to be equal to the number of subgraphs removed in this way.

  We are left to show that~$\ds(G[H' \to H]) = \ds(G) - c$.
  Consider any minimal dominating set~$D$ for~$G$. We call a graph~$H_i$ \emph{interesting}
  under~$D$ if~$D$ intersects~$H_i$ in any vertex besides~$x_i$.
  \begin{claim}
    There exists a solution~$D'$ of size equal to~$D$ under which at
    most~$|R'|$ graphs per equivalence class~$\mathcal C \in \mathcal H$ are
    interesting.
  \end{claim}
  \begin{proof}
    Consider any such class~$\mathcal C \in \mathcal H$. If~$|\mathcal C| \leq
    |R'|$ we are done, so assume otherwise. Let~$R'' \subseteq R'$ be the set
    of vertices that are dominated through vertices in graphs contained
    in~$\mathcal C$ and select up to~$|R''|$ many graphs that together already
    dominate~$R''$; we let~$D'$ to be equivalent to~$D$ on these graphs. Any
    graph~$H \in \mathcal C$ \emph{not} selected in this way only needs to
    dominate itself and we add its centre vertex~$V(H) \cap X$ to~$D'$. Since
    every graph needs to intersect any dominating set in at least one vertex,
    $|D'| \leq |D|$ and since~$D$ is minimal we must have $|D'| = |D|$.
    Finally, only the~$|R''| \leq |R'|$ selected graphs are interesting
    under~$D'$, as claimed.
  \end{proof}

  \noindent
  Thus, let us assume in the following that~$D$ is such a minimal solution
  under which at most~$|R'|$ graphs per class~$\mathcal C \in \mathcal H$ are
  interesting. For such a solution~$D$ of~$H$, we construct a solution~$D'$
  of~$H'$ of size $|D| - c$ as follows. For a class~$\mathcal C \in \mathcal
  H$, let~$\mathcal C' \subseteq \mathcal C$ be those graphs that are contained
  in~$H'$ and let $\mathcal I \subseteq \mathcal C$ be the graphs that are
  interesting under~$D$. Since~$|\mathcal C| \leq |\mathcal C'|$, we can pair
  every graph~$H_i \in \mathcal C$ with a graph~$H_{i'} \in \mathcal C'$. Fix
  such a pair~$H_i, H_{i'}$, let~$A = R' \cap D$ and let~$B$ be those vertices
  of~$R'$ that are exclusively dominated vertices in~$H_i$. Since $\chi_i[A,B]
  = \chi_{i'}[A,B]$, there must exist a set of~$|D \cap V(H_i)|$ vertices
  in~$H_{i'}$ that dominate~$H_{i'}$ and~$B$. If we repeat this construction
  for every graph~$H_i \in \mathcal C$ with their respective pair in~$\mathcal
  C'$ and then pick the centre vertex $X \cap V(H_j)$ for all~$H_j \in
  \mathcal C' \setminus \mathcal C$, then the resulting set~$D'$ has
  size~$\leq |D| - c$ and dominates all of~$H'$.

  The same proof works in reverse if we start with a dominating set~$D'$
  of~$H'$ to construct a dominating set~$D$ of~$H$ with~$|D| \leq |D'| + c$;
  thus we conclude that~$\ds(G[H' \to H]) = \ds(G) - c$ and the claim follows.
\end{proof}

\noindent
\begin{theorem}\label{thm:exp-graph-linear-kernel}
  \Problem{Dominating Set} above~$3$-independence 
  has a linear kernel in bounded expansion graphs.
\end{theorem}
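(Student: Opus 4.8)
The plan is to produce a kernel with $O(|R|)$ vertices by shrinking the two potentially large parts of the graph, $X$ and $N(X)$, while leaving $R$ (whose size is exactly the parameter) essentially untouched. I would first record the structural facts that make the decomposition clean: since $X$ is $3$-independent, $N(X)=\bigsqcup_{x\in X}N(x)$; for distinct $x,x'\in X$ the balls $N^2[x]$ and $N^2[x']$ can meet only inside $R$ (a common vertex at distance $\le 2$ from both forces $d(x,x')=4$ and puts that vertex in $R$); and every vertex of $R$ lies at distance exactly $2$ from $X$ (hence inside some $N^2[x]$) or at distance $3$ from $X$, in which case all its neighbours are in $R$ as well. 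Thus $G$ is the union of $R$ with the pieces $H_x=G[N^2[x]]$, indexed by $X$ and pairwise overlapping only in $R$, and the task splits into (a) shrinking each piece $H_x$ and (b) cutting the number of pieces down to $O(|R|)$.

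For (a) I would apply Lemma~\ref{lemma:prot-replace} to every $x\in X$, replacing $H_x$ by a subgraph $H'_x$ of size $O(4^{|R'_x|}|R'_x|)$ with $R'_x=N^2(x)\cap R$; because each replacement fixes its boundary $R'_x\subseteq R$ and is a subgraph of $H_x$, and the pieces overlap only in $R$, these replacements compose without interference and preserve $\ds(G)$, the set $R$, and the fact that $X$ is a maximal $3$-independent set. For (b) I would group $X$ into \emph{signature classes} $X_{R'}:=\{x\in X: N^2(x)\cap R=R'\}$. The class $X_\emptyset$ consists of centres of connected components $G[N[x]]$ of domination number $1$, so all but one of them can be deleted, decreasing $p$ by the matching offset. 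For each non-empty signature $R'$, Lemma~\ref{lemma:prot-class-replace} replaces the union of the corresponding pieces by a subgraph of size bounded in terms of $|R'|$, again at the cost of an additive offset to $p$; applying these class replacements one at a time (they touch only the $N[x]$-parts and their shared boundary inside $R$) yields an equivalent instance $(G',X',p')$. Since the number of distinct signatures $N^2(x)\cap R$ is $O(|R|)$ — this is the neighbourhood-complexity property of bounded-expansion classes, of which Lemma~\ref{lemma:twin} is the distance-$1$ case — the number of surviving pieces is $O(|R|)$, so $|V(G')|$ is bounded solely in terms of $|R|$. Correctness is inherited from the two replacement lemmas together with the offset bookkeeping, the output is again a valid instance, and the procedure runs in fpt-time since there are $O(|R|)$ replacement applications, each linear-time.

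The main obstacle is quantitative. Lemma~\ref{lemma:prot-class-replace} only delivers a per-class bound that is exponential in $|R'|$, so summing over the $O(|R|)$ signature classes naively gives a kernel of size $2^{O(|R|)}$ rather than $O(|R|)$. Upgrading this to a genuinely \emph{linear} kernel is the crux: one has to exploit that the reductions of steps (a) and (b) interact with the sparsity of the class far more tightly than the worst-case bounds indicate — for instance, that signatures with large support are scarce, and that after the local replacements of step (a) the number of distinct \emph{types} a surviving piece can exhibit (its characteristic vector together with how it attaches to $R$) is itself only $O(|R|)$, with each type needing only a bounded number of representatives. Carrying out this accounting so that the vertices contributed by $X$, by $N(X)$, and by $R$ all add up to $O(|R|)$ is the step I expect to be the most delicate part of the argument.
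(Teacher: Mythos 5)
Your setup — the decomposition of $G$ into the pieces $H_x=G[N^2[x]]$ overlapping only in $R$, the per-piece shrinking via Lemma~\ref{lemma:prot-replace}, and the grouping of $X$ by the signature $N^2(x)\cap R$ followed by Lemma~\ref{lemma:prot-class-replace} — matches the paper's strategy for part of the instance. But the obstacle you flag at the end is a genuine gap, not a bookkeeping subtlety, and the resolution is not the ``more delicate accounting'' of the same reductions that you anticipate. As written, your argument only yields a kernel of size $2^{O(|R|)}$, because nothing prevents a signature class with $|R'|=\Theta(|R|)$, for which Lemma~\ref{lemma:prot-class-replace} (and already Lemma~\ref{lemma:prot-replace} in your step (a)) produces a replacement of size exponential in $|R|$.

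The two missing ideas are these. First, split the pieces by the threshold $2\tau$ from the twin class lemma: apply the class-replacement machinery \emph{only} to pieces with $|N^2(x)\cap R|\le 2\tau$, where $|R'|$ is a constant, so each class's replacement has constant size and the $O(|R|)$ classes contribute $O(|R|)$ vertices in total. Second — and this is the step no refinement of the replacement lemmas will give you — the degree part of the twin class lemma bounds the number of \emph{large} pieces by $2\tau|R|$, so after the first reduction $|X'|=O(|R|)$ and hence $\ds(G_1)\le |X'|+|R|=O(|R|)$; one may therefore assume $p'=O(|R|)$ and finish by invoking, as a black box, the known linear kernel for \textsc{Dominating Set} in bounded expansion classes parametrised by solution size~\cite{DomsetBndExpKernel}. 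In other words, the large pieces are never locally replaced at all; they are handled by reducing the whole problem back to the standard parametrisation once the parameter values coincide up to a constant factor. Without this reduction your proposal does not reach a linear (or even polynomial) bound.
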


\begin{proof}
  Let~$(G,X,p)$ be the input instance and
  let~$x_1,\ldots,x_\ell$ be the members of the $3$-independent set~$X$.
  We define the graphs~$H_i := G[N^2[x_i]]$ and their respective
  $R$-neighbours~$R_i := N^2[x_i] \cap R$
  for~$1 \leq i \leq \ell$.
  Let~$\tau$ be as in Lemma~\ref{lemma:twin}. We partition the graphs~$\{H_i\}_{1 \leq i
  \leq \ell}$ into two sets~$\mathcal L$, $\mathcal S$ where $H_i \in \mathcal
  L$ iff~$|R_i| > 2\tau$; and~$\mathcal S$ contains all remaining graphs.

  Let us first reduce~$\mathcal S$. Let $\mathcal R := \{ R_i \mid H_i \in
  \mathcal S \}$ be the $R$-neighbourhoods of the graphs collected in~$\mathcal
  S$. By the twin class lemma, $|\mathcal R| \leq (4^\tau + 2\tau) |R|$,
  however, for each member~$R' \in \mathcal R$ we might have many
  graphs in~$\mathcal S$ that intersect~$R$ in precisely this set~$R'$.

  Fix~$R' \in \mathcal R$ for now and let~$\mathcal S[R']$ be those graphs
  of~$\mathcal S$ that intersect~$R$ in~$R'$. Let~$H_{R'} := G[\bigcup_{H \in
  \mathcal S[R']}V(H)]$ be the joint graph of the subgraphs in~$\mathcal
  S[R']$. Since~$|R'| \leq 2\tau$, and~$\tau$ is a constant depending only on
  the graph class, we can apply Lemma~\ref{lemma:prot-class-replace} to
  compute a replacement~$H'_{R'}$ of size~$O(|R'|^{2|R'|+2} 4^{|R'|})$
  with offset~$c$ in polynomial time. We apply the replacement~$G[H_{R'} \to H'_{R'}]$
  and decrease~$p$ by~$c$. Repeating this procedure for all~$R' \in \mathcal R$
  yields a graph~$G_1$ (a subgraph of~$G$) in which the small graphs~$\mathcal S$ in total contain
  at most
  $
    |\mathcal R| \cdot O(|\tau|^{2\tau+2} 4^{\tau})
    = O(|R|)
  $
  vertices, a $3$-independent set~$X' \subseteq X$ of~$G_1$, and a new input~$p'$.
  This concludes our reduction for~$\mathcal S$.

  Let us now deal with~$\mathcal L$ in~$G_1$. By the twin class lemma, $|\mathcal L| \leq
  2\tau |R|$. But then $|X'|$ has size bounded in $O(|R|)$ and we conclude that the
  size of a minimal dominating set for $G_1$ is bounded by $O(|R|)$, hence we assume
  that $p'$ is bounded by $O(|R|)$ (otherwise the instance is positive and we can
  output a trivial instance). We now apply the existing linear 
  kernel~\cite{DomsetBndExpKernel} for \textsc{Dominating Set} to $G_1$. The output
  of the kernelization is a subgraph of the original graph, hence we collect the
  remaining vertices of the $3$-independent set $X'' \subseteq X'$ in order to output
  a well-formed instance $(G'', X'', p'')$. This concludes the proof.
\end{proof}

\section{Above $4$-independence: simple domination}

\noindent
In the case where the lower-bound set~$X$ is $4$-independent we now have that
for distinct~$x,x' \in X$ it holds that~$N^2(x) \cap N^2(x') = \emptyset$,
thus for each~$x \in X$ the set~$N^2(x) \cap R$ can only be dominated from~$R$
and~$N(x)$. Let us call an instance~$(G,X)$ \emph{reduced} if for every~$x
\in X$ the intersection $N^2(x) \cap R$ is non-empty. We can easily pre-process
our input instance to enforce this property: if such an~$x$ would exist we can
simply remove~$N[x]$ from~$G$ to obtain an equivalent instance. In a reduced
instance the parameter~$|R|$ is necessarily big compared to~$|X|$:\looseness-1

\begin{observation}
  For every reduced instance~$(G,X)$ of \Problem{Dominating Set} above $4$-independence
  it holds that~$|R| \geq |X|$.
\end{observation}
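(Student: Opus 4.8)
The plan is to charge one vertex of $R$ to each vertex of $X$ and show the charging is injective. Concretely, for each $x \in X$ I would look at the set $R_x := N^2(x) \cap R$. The hypothesis that the instance is reduced says precisely that $R_x \neq \emptyset$, so I can fix a representative $r_x \in R_x$ for every $x \in X$.

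The next step is to argue that the $r_x$ are pairwise distinct, and this is where $4$-independence enters. As already observed just before the statement, for distinct $x, x' \in X$ we have $N^2(x) \cap N^2(x') = \emptyset$: since $\dist(x,x') \geq 5$, any vertex $v$ at distance at most $2$ from both $x$ and $x'$ would force $\dist(x,x') \leq \dist(x,v) + \dist(v,x') \leq 4$, a contradiction. Hence the sets $R_x$ are pairwise disjoint subsets of $R$, and therefore the chosen representatives $r_x$ are distinct.

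Finally I would conclude $|R| \geq \big|\{\, r_x : x \in X \,\}\big| = |X|$ (equivalently, $R$ contains the disjoint union $\bigcup_{x \in X} R_x$ of $|X|$ non-empty sets). I do not expect any real obstacle here: the only step that needs a line of justification is the triangle-inequality argument for the disjointness of the second neighbourhoods, together with the trivial remark that $R_x \subseteq R$ holds by definition. It is worth flagging that the reduced hypothesis is what makes the bound work — it is exactly the condition guaranteeing that every $R_x$ is non-empty, and without it an $x \in X$ with $N^2(x) \cap R = \emptyset$ would break the count.
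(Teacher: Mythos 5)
Your argument is correct and is exactly the one the paper intends: the observation is stated without an explicit proof precisely because the preceding paragraph already supplies both ingredients you use, namely that $4$-independence forces the sets $N^2(x)\cap R$ to be pairwise disjoint and that the reduced hypothesis makes each of them non-empty. Your triangle-inequality justification of the disjointness and the injective choice of representatives fill in the details faithfully; there is no gap.
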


\begin{corollary}
  Let~$\mathcal G$ be a graph class for which \Problem{Dominating Set} is
  in~$\FPT$. Then \Problem{Dominating Set} above $4$-independence is in~$\FPT$
  for~$\mathcal G$ as well.
\end{corollary}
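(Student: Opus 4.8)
The plan is to turn an arbitrary instance into a \emph{reduced} one living on an induced subgraph, observe (via the Observation above) that reduced instances satisfy $p = O(|R|)$, and then simply call the assumed $\FPT$-algorithm for \Problem{Dominating Set}. Throughout we use that asking for a dominating set of size exactly $p$ is equivalent to asking $\ds(G)\le p$ (when $p\le|V(G)|$; otherwise the answer is trivially no), and — as is standard, and holds for every class on which \Problem{Dominating Set} is currently known to be $\FPT$ — we assume $\mathcal G$ is closed under vertex deletion.

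First I would carry out the pre-processing sketched just before the Observation and verify that it preserves equivalence. Suppose $x\in X$ satisfies $N^2(x)\cap R=\emptyset$. Because $X$ is $4$-independent, a vertex at distance exactly two from $x$ cannot lie in $N[x']$ for any $x'\in X\setminus\{x\}$ (that would put $x,x'$ at distance at most $3$) and cannot lie in $X$ either, so it would have to lie in $R$. Hence $N^2(x)\cap R=\emptyset$ forces $N^2[x]=N[x]$, which means $N[x]$ is a connected component of $G$, dominated by the single vertex $x$; therefore $\ds(G)=\ds(G-N[x])+1$. Deleting $N[x]$, deleting $x$ from $X$, and decreasing $p$ by one thus yields an equivalent instance, and this leaves $R$ unchanged, keeps $X\setminus\{x\}$ a maximal $4$-independent set of the new graph, and (the deleted component being disjoint from the rest of $G$) creates no new such $x$. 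Performing these deletions for all such $x$ at once — polynomial time — produces an equivalent instance $(G',X',p')$ with $G'$ an induced subgraph of $G$, $R'=R$, $X'$ a maximal $4$-independent set of $G'$, $p'=p-t$ where $t$ is the number of deleted components, and $(G',X')$ reduced.

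Now the Observation gives $|R|=|R'|\ge|X'|$, and combined with the standing convention $|X'|\le p'\le|X'|+|R'|$ (instances outside this range being trivial) this yields $p'\le|X'|+|R|\le 2|R|$. So the reduced instance is exactly the question ``$\ds(G')\le p'$?'' with $G'\in\mathcal G$ and $p'\le 2|R|$. Running the assumed $\FPT$-algorithm for \Problem{Dominating Set} on $G'$ with parameter $p'$ takes time $g(p')\cdot|V(G')|^{O(1)}\le g(2|R|)\cdot|V(G)|^{O(1)}$; adding the polynomial pre-processing cost, the whole procedure is $\FPT$-time parametrised by $|R|$, as required.

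I expect no serious technical obstacle here. The one step that needs care is identifying $N[x]$ with a connected component of $G$ when $N^2(x)\cap R=\emptyset$, which is precisely where the hypothesis $r\ge 4$ (giving pairwise vertex-disjoint $N^2$-balls around members of $X$) is used; and the mild closure assumption on $\mathcal G$, without which an induced subgraph of a member of $\mathcal G$ need not stay in $\mathcal G$. The remainder is routine bookkeeping about the reduction and the parameter bound.
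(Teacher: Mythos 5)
Your proof is correct and takes essentially the same route as the paper, which states this corollary without proof as an immediate consequence of the preceding pre-processing step and the Observation $|R|\ge|X|$: once the instance is reduced, $p'\le |X'|+|R|\le 2|R|$, so the assumed $\FPT$-algorithm for \textsc{Dominating Set} with parameter $p'$ runs in time bounded by a function of $|R|$. Your explicit verification that $N^2(x)\cap R=\emptyset$ forces $N[x]$ to be a full connected component (using $4$-independence), and your flagging of the mild closure assumption on $\mathcal G$ needed to run the algorithm on the pruned graph, correctly fill in details the paper leaves tacit.
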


\begin{corollary}
  Let~$\mathcal G$ be a graph class for which \Problem{Dominating Set} admits
  a polynomial kernel. Then \Problem{Dominating Set} above $4$-independence
  admits a polynomial kernel for~$\mathcal G$ as well.\looseness-1
\end{corollary}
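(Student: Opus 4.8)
The plan is to follow the route of the preceding corollary for~$\FPT$, with one extra step needed to produce a syntactically well-formed output instance. First dispose of the trivial cases: if~$p < |X|$ then~$G$ has no dominating set of size~$p$ (since~$\ds(G) \geq |X|$ whenever~$X$ is $2$-independent), and if~$p > |X| + |R|$ then~$X \cup R$ is a dominating set of size at most~$p$; in either case return a fixed constant-size dummy instance, so from now on~$|X| \leq p \leq |X| + |R|$. Next, apply the preprocessing described just above the Observation: while some~$x \in X$ satisfies~$N^2(x) \cap R = \emptyset$, note that then~$N^2(x) = \emptyset$ (otherwise a vertex at distance~$2$ from~$x$ would have to be adjacent to another member of~$X$, contradicting $4$-independence), so~$N[x]$ is a connected component of~$G$ dominated by~$x$ alone; deleting~$N[x]$ and decreasing~$p$ by one therefore yields an equivalent instance in which~$X \setminus \{x\}$ is still a maximal $4$-independent set and~$R$ is unchanged. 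Once no such~$x$ remains the instance is reduced, so the Observation gives~$|R| \geq |X|$ and hence~$p \leq |X| + |R| \leq 2|R|$; that is, the solution size is linear in the parameter.

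Now discard the witness set and view the reduced instance as a plain \Problem{Dominating Set} instance~$(G,p)$ over a graph from~$\mathcal G$. By hypothesis there is a polynomial kernelization for \Problem{Dominating Set} on~$\mathcal G$ (parametrised by the solution size); running it on~$(G,p)$ produces an equivalent instance~$(G'',p'')$ with~$G'' \in \mathcal G$ and~$|V(G'')|$ bounded by a polynomial in~$p$, hence by a polynomial in~$|R|$.

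Finally, re-wrap~$(G'',p'')$ into an instance of \Problem{Dominating Set} above~$4$-independence: greedily compute any maximal $4$-independent set~$X''$ of~$G''$ and set~$R'' := V(G'') \setminus N[X'']$. Then~$(G'',X'',p'')$ is a legal instance, and as a \Problem{Dominating Set} instance it is exactly~$(G'',p'')$, so it is a positive instance if and only if the original~$(G,X,p)$ is. Its parameter satisfies~$|R''| \leq |V(G'')|$, which is polynomial in~$|R|$, so the whole pipeline is a polynomial kernelization; composing the equivalences from the three stages gives correctness.

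The bookkeeping (trivial cases, the passage to a reduced instance, the size estimates) is routine. The one point that needs a little care is the last step: arguing that the \Problem{Dominating Set} kernel, which ignores the witness~$X$ entirely, may be post-composed with a freshly chosen maximal $4$-independent set without affecting the answer, and that its output remains inside~$\mathcal G$ -- the latter being automatic when the kernel returns an induced subgraph, as the known sparse-class kernels do, and otherwise already part of the hypothesis that~$\mathcal G$ admits a polynomial kernel for \Problem{Dominating Set}.
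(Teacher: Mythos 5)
Your proposal is correct and follows exactly the route the paper intends: the paper leaves this corollary proofless, relying on the Observation that reduced instances satisfy $|R|\geq|X|$, hence $p\leq 2|R|$, so the solution-size kernel applies; you supply the same argument plus the bookkeeping (trivial cases, the budget decrement when deleting $N[x]$, and re-wrapping the output with a fresh maximal $4$-independent set) that the paper omits. One cosmetic slip: under the paper's convention $N^2(x)\supseteq N(x)$, so what you want to say is that $N^2(x)\cap R=\emptyset$ forces there to be no vertex at distance exactly two from $x$ (whence $N[x]$ is a component), not that $N^2(x)=\emptyset$.
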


\noindent
On the other hand, we showed in Section~\ref{sec:3-hardness} that the
problem remains~$\W[2]$-hard for any~$r \geq 2$ in general graphs.

\section{Conclusion}

\noindent
We considered \Problem{Dominating Set} parametrised by the residual of a
given~$r$-independent set and investigated how the value of~$r$ and the choice
of input graph classes affect its tractability. We observed that the
tractability does improve from~$r=2$ to~$r=3$ as it goes from being
$\paraNP$-complete to `merely` $\W[2]$-hard and at least admits an
$\XP$-algorithm. Larger values of~$r$, however, do not increase the
tractability as the problem becomes essentially equivalent to
\Problem{Dominating Set}.

If we consider sparse classes (bounded expansion and nowhere dense), the
improvement in tractability from~$r=2$ to~$r=3$ is much more pronounced;
changing from~$\paraNP$-complete to $\FPT$ and even admitting a linear
kernel in bounded expansion classes. We very much believe that the kernel
can be extended to nowhere dense classes, but leave that quite technical
task as an open question.

\smallskip
\noindent{\small \textbf{Acknowledgments}: We thank our anonymous reviewer for helpfully pointing
out how to achieve a linear kernel in Theorem~\ref{thm:exp-graph-linear-kernel}.}

\bibliographystyle{plainurl}
\bibliography{biblio,conf}

\def\Springer{Springer} \def\csaachen{Dept. of Computer Science, RWTH Aachen
  University} \def\LNCS{Lecture Notes in Computer Science} \def\LNAI{Lecture
  Notes in Artificial Intelligence} \def\NC{{\it NC}} \def\AC{{\it AC}}
  \def\JDA{Journal of Discrete Algorithms} \def\JoA{Journal of Algorithms}
  \def\TCS{Theory of Computing Systems} \def\SIDMA{SIAM Journal on Discrete
  Mathematics} \def\germany{Federal Republic of Germany} \def\LIPIcs{Leibniz
  International Proceedings in Informatics (LIPIcs)} \def\Dagstuhl{Schloss
  Dagstuhl--Leibniz-Zentrum f{\"u}r Informatik} \def\APBC#1{Proceedings of the
  #1~Asia-Pacific Bioinformatics Conference (APBC)} \def\CTW#1{Proceedings of
  the #1~Cologne-Twente Workshop on Graphs and Combinatorial Optimization
  (CTW)} \def\WEA#1{Proceedings of the #1~International Workshop on
  Experimental Algorithms (WEA)} \def\IPCO#1{Proceedings of the #1~Conference
  on Integer Programming and Combinatorial Optimization (IPCO)}
  \def\MOBIHOC#1{Proceedings of the #1~ACM International Symposium on Mobile Ad
  Hoc Networking and Computing (MOBIHOC)} \def\ACID#1{Proceedings of the
  #1~Workshop on Algorithms and Complexity in Durham (ACiD)}
  \def\FCT#1{Proceedings of the #1~Conference on Fundamentals of Computation
  Theory} \def\ISMB#1{Proceedings of the #1~International Conference on
  Intelligent Systems for Molecular Biology} \def\IWPEC#1{Proceedings of the
  #1~International Workshop on Parameterized and Exact Computation (IWPEC)}
  \def\IPEC#1{Proceedings of the #1~International Workshop on Parameterized and
  Exact Computation (IPEC)} \def\IWOCA#1{Proceedings of the #1~International
  Workshop on Combinatorial Algorithms (IWOCA)} \def\RECOMB#1{Proceedings of
  the #1~ACM Annual International Conference on Computational Molecular
  Biology} \def\APPT#1{Proceedings of the #1~International Workshop on Advanced
  Parallel Programming Technologies} \def\CIAC#1{Proceedings of the #1~Italian
  Conference on Algorithms and Complexity} \def\CSL#1{Proceedings of the
  #1~international Workshop on Computer Science Logic (CSL)}
  \def\CIAC#1{Proceedings of the #1~Italian Conference on Algorithms and
  Complexity} \def\MEMICS#1{Proceedings of the #1~Doctoral Workshop on
  Mathematical and Engineering Methods in Computer Science (MEMICS)}
  \def\MFCS#1{Proceedings of the #1~Conference on Mathematical Foundations of
  Computer Science (MFCS)} \def\TAMC#1{Proceedings of the #1~Conference on
  Theory and Applications of Models of Computation (TAMC)}
  \def\STACS#1{Proceedings of the #1~Symposium on Theoretical Aspects of
  Computer Science (STACS)} \def\SOFSEM#1{Proceedings of the #1~Conference on
  Current Trends in Theory and Practice of Informatics (SOFSEM)}
  \def\GD#1{Proceedings of the #1~International Symposium on Graph Drawing
  (GD)} \def\COCOON#1{Proceedings of the #1~Annual International Computing and
  Combinatorics Conference (COCOON)} \def\CPM#1{Proceedings of the #1~Annual
  Symposium Combinatorial Pattern Matching (CPM)} \def\ICALP#1{Proceedings of
  the #1~International Colloquium on Automata, Languages, and Programming
  (ICALP)} \def\IWOCA#1{Proceedings of the #1~International Workshop on
  Combinatorial Algorithms (IWOCA)} \def\IMYCS#1{Proceedings of the
  #1~International Meeting of Young Computer Scientists}
  \def\LATIN#1{Proceedings of the #1~Symposium on Latin American Theoretical
  Informatics (LATIN)} \def\WG#1{Proceedings of the #1~International Workshop
  on Graph-Theoretic Concepts in Computer Science (WG)}
  \def\EDMCC#1{Proceedings of the #1~European Conference on Distributed Memory
  Computing} \def\Nixdorf#1{Proceedings of the #1~Heinz Nixdorf Symposium on
  Parallel Architectures and Their Efficient Use} \def\FSTTCS#1{Proceedings of
  the #1~Conference on Foundations of Software Technology and Theoretical
  Computer Science (FSTTCS)} \def\ISAAC#1{Proceedings of the #1~International
  Symposium on Algorithms and Computation (ISAAC)} \def\ESA#1{Proceedings of
  the #1~European Symposium on Algorithms (ESA)} \def\AAIM#1{Proceedings of the
  #1~International Conference on Algorithmic Aspects in Information and
  Management (AAIM)} \def\IPL{Information Processing Letters} \def\PPL{Parallel
  Processing Letters} \def\MST{Mathematical Systems Theory} \def\SMD{Soviet
  Math. Doklady} \def\TOCS{Theory of Computing Systems} \def\FOCS#1{Proceedings
  of the #1 IEEE Symposium on Foundations of Computer Science (FOCS)}
  \def\SPAA#1{Proceedings of the #1 ACM Symposium on Parallel Algorithms and
  Architectures} \def\STOC#1{Proceedings of the #1 ACM Symposium on Theory of
  Computing (STOC)} \def\SODA#1{Proceedings of the #1 ACM-SIAM Symposium on
  Discrete Algorithms (SODA)} \def\Structure#1{Proceedings of the #1 IEEE
  Symposium on Structure in Complexity} \def\WADS#1{Proceedings of the #1
  Workshop on Algorithms and Data Structures (WADS)} \def\ALT#1{Proceedings of
  the #1 International Workshop on Algorithmic Learning Theory}
  \def\COLT#1{Proceedings of the #1 International Workshop on Computational
  Learning Theory} \def\SWAT#1{Proceedings of the #1 Scandinavian Workshop on
  Algorithm Theory (SWAT)} \def\APPROX#1{Proceedings of the #1 International
  Workshop on Approximation Algorithms for Combinatorial Optimization Problems
  (APPROX)} \def\CSR#1{Proceedings of the #1 International Computer Science
  Symposium in Russia (CSR)} \def\CCC#1{Proceedings of the #1 IEEE Conference
  on Computational Complexity (CCC)} \def\LICS#1{Proceedings of the #1 Annual
  IEEE Symposium on Logic in Computer Science (LICS)} \def\KDD#1{Proceedings of
  the #1 {ACM} {SIGKDD} International Conference on Knowledge Discovery and
  Data Mining (KDD)} \def\INFCOM{Information and Computation}
  \def\INFCON{Information and Control} \def\ieeetoc{IEEE Transactions on
  Computers} \def\ieeetocad{IEEE Transactions on Computer-Aided Design}
  \def\IEEETMC{IEEE Transactions on Mobile Computing} \def\ieeedat{IEEE Design
  and Test} \def\siam{SIAM Journal on Computing}
  \ifx\redefineme\undefined\relax\else\redefineme\fi \def\ENDOFDEFS{}
\begin{thebibliography}{10}

\bibitem{DomsetPlanarFPT}
J.~Alber, H.~L. Bodlaender, H.~Fernau, T.~Kloks, and R.~Niedermeier.
\newblock Fixed parameter algorithms for {DOMINATING} {SET} and related
  problems on planar graphs.
\newblock {\em Algorithmica}, 33(4):461--493, 2002.

\bibitem{DomsetPlanarKernel}
J.~Alber, M.~R. Fellows, and R.~Niedermeier.
\newblock Polynomial-time data reduction for {Dominating Set}.
\newblock {\em JACM}, 51:363--384, 2004.

\bibitem{MaxSatAboveTight}
N.~Alon, G.~Gutin, E.~Jung Kim, S.~Szeider, and A.~Yeo.
\newblock Solving max-\emph{r}-sat above a tight lower bound.
\newblock {\em Algorithmica}, 61(3):638--655, 2011.

\bibitem{DomsetDegenerateFPT}
N.~Alon and S.~Gutner.
\newblock Linear time algorithms for finding a dominating set of fixed size in
  degenerated graphs.
\newblock {\em Algorithmica}, 54(4):544--556, 2009.

\bibitem{MultiwayCutAboveLP}
M.~Cygan, M.~Pilipczuk, M.~Pilipczuk, and J.O. Wojtaszczyk.
\newblock On multiway cut parameterized above lower bounds.
\newblock {\em TOCT}, 5(1):3, 2013.

\bibitem{DomsetBndExpKernel}
P.G. Drange, M.~Sortland Dregi, F.~V. Fomin, S.~Kreutzer, D.~Lokshtanov,
  M.~Pilipczuk, M.~Pilipczuk, F.~Reidl, F.~{S{\'{a}}nchez Villaamil},
  S.~Saurabh, S.~Siebertz, and S.~Sikdar.
\newblock Kernelization and sparseness: the case of dominating set.
\newblock In {\em {STACS}}, volume~47 of {\em LIPIcs}, pages 31:1--31:14.
  Schloss Dagstuhl - Leibniz-Zentrum fuer Informatik, 2016.

\bibitem{DomsetDualitySparse}
Z.~\Dvorak.
\newblock On distance $r$-dominating and $2r$-independent sets in sparse
  graphs.
\newblock {\em CoRR}, abs/1710.10010, 2017.
\newblock \href {http://arxiv.org/abs/1710.10010} {\path{arXiv:1710.10010}}.

\bibitem{FOBndExp}
Zdenek Dvorak, Daniel Kr{\'{a}}l, and Robin Thomas.
\newblock Testing first-order properties for subclasses of sparse graphs.
\newblock volume~60, pages 36:1--36:24, 2013.

\bibitem{AlmostLinearKernelNowhereDense}
K.~Eickmeyer, A.~C. Giannopoulou, S.~Kreutzer, O.~Kwon, M.~Pilipczuk,
  R.~Rabinovich, and S.~Siebertz.
\newblock Neighborhood complexity and kernelization for nowhere dense classes
  of graphs.
\newblock In {\em 44th International Colloquium on Automata, Languages, and
  Programming, {ICALP} 2017, July 10-14, 2017, Warsaw, Poland}, pages
  63:1--63:14, 2017.

\bibitem{BidimKernels}
F.~V. Fomin, D.~Lokshtanov, S.~Saurabh, and D.~M. Thilikos.
\newblock Bidimensionality and kernels.
\newblock In {\em \SODA{21st}}, pages 503--510. SIAM, 2010.

\bibitem{DomsetKernelHMinorFree}
F.~V. Fomin, D.~Lokshtanov, S.~Saurabh, and D.~M. Thilikos.
\newblock Linear kernels for (connected) dominating set on \emph{H}-minor-free
  graphs.
\newblock In {\em \SODA{23rd}}, pages 82--93. SIAM, 2012.

\bibitem{DomsetKernelHTopFree}
F.~V. Fomin, D.~Lokshtanov, S.~Saurabh, and D.~M. Thilikos.
\newblock Linear kernels for (connected) dominating set on graphs with excluded
  topological subgraphs.
\newblock In {\em \STACS{30th}}, volume~20 of {\em \LIPIcs}, pages 92--103.
  \Dagstuhl, 2013.

\bibitem{BndExpKernelsJournal}
J.~\Gajarsky, P.~\Hlineny, J.~\Obdrzalek, S.~Ordyniak, F.~Reidl, P.~Rossmanith,
  F.~{\Sanchez Villaamil}, and S.~Sikdar.
\newblock Kernelization using structural parameters on sparse graph classes.
\newblock {\em J. Comput. Syst. Sci.}, 84:219--242, 2017.

\bibitem{FONowhereDenseJournal}
M.~Grohe, S.~Kreutzer, and Siebertz S.
\newblock Deciding first-order properties of nowhere dense graphs.
\newblock {\em J. {ACM}}, 64(3):17:1--17:32, 2017.

\bibitem{VertexCoverAboveG}
G.~Gutin, E.J. Kim, M.~Lampis, and V.~Mitsou.
\newblock Vertex cover problem parameterized above and below tight bounds.
\newblock {\em Theory of Computing Systems}, 48(2):402--410, 2011.

\bibitem{BetweennessAboveG}
G.~Gutin, E.J. Kim, M.~Mnich, and A.~Yeo.
\newblock Betweenness parameterized above tight lower bound.
\newblock {\em Journal of Computer and System Sciences}, 76(8):872--878, 2010.

\bibitem{LinearArrangementAboveG}
G.~Gutin, A.~Rafiey, S.~Szeider, and A.~Yeo.
\newblock The linear arrangement problem parameterized above guaranteed value.
\newblock {\em Theory of Computing Systems}, 41(3):521--538, 2007.

\bibitem{TernaryCSPAboveG}
G.~Gutin, L.~{van Iersel}, M.~Mnich, and A.~Yeo.
\newblock All ternary permutation constraint satisfaction problems
  parameterized above average have kernels with quadratic numbers of variables.
\newblock In {\em Proc. of the 18th ESA}, volume 6346 of {\em Lecture Notes in
  Computer Science}, pages 326--337. Springer, 2010.

\bibitem{PartitionRefinement}
M.~Habib, C.~Paul, and L.~Viennot.
\newblock A synthesis on partition refinement: {A} useful routine for strings,
  graphs, boolean matrices and automata.
\newblock In {\em \STACS{25th}}, volume 08001 of {\em \LIPIcs}, pages 25--38.
  \Dagstuhl, 2008.

\bibitem{SeparablePlanar3SAT}
D.~Lichtenstein.
\newblock Planar formulae and their uses.
\newblock {\em {SIAM} J. Comput.}, 11(2):329--343, 1982.

\bibitem{VCAboveLP}
D.~Lokshtanov, NS~Narayanaswamy, V.~Raman, MS~Ramanujan, and S.~Saurabh.
\newblock Faster parameterized algorithms using linear programming.
\newblock {\em TALG}, 11(2):15, 2014.

\bibitem{AboveGuarantee}
M.~Mahajan and V.~Raman.
\newblock Parameterizing above guaranteed values: {MaxSat} and {MaxCut}.
\newblock {\em Journal of Algorithms}, 31(2):335--354, 1999.

\bibitem{AboveGuaranteeProgramme}
M.~Mahajan, V.~Raman, and S.~Sikdar.
\newblock Parameterizing above or below guaranteed values.
\newblock {\em Journal of Computer and System Sciences}, 75(2):137--153, 2009.

\bibitem{Sparsity}
J.~\Nesetril and P.~{Ossona de Mendez}.
\newblock {\em Sparsity: Graphs, Structures, and Algorithms}, volume~28 of {\em
  Algorithms and Combinatorics}.
\newblock Springer, 2012.

\bibitem{DomsetDegenerateKernel}
G.~Philip, V.~Raman, and S.~Sikdar.
\newblock Polynomial kernels for dominating set in graphs of bounded degeneracy
  and beyond.
\newblock {\em {ACM} Transactions on Algorithms}, 9(1), 2012.

\bibitem{FelixThesis}
F.~Reidl.
\newblock {\em {S}tructural sparseness and complex networks}.
\newblock Dr., {A}achen, Techn. Hochsch., {A}achen, 2016.
\newblock {A}achen, Techn. Hochsch., Diss., 2015.

\bibitem{Bounded34SAT}
C.~A. Tovey.
\newblock A simplified {NP}-complete satisfiability problem.
\newblock {\em Discrete Applied Mathematics}, 8(1):85--89, 1984.

\end{thebibliography}

\newpage
\section*{Appendix}

We claim that that~$T_i[\mathcal S]$ is the size of a partial
solution in $N[\{x_1,\ldots,x_i\}]$ that covers the 
neighbourhoods~$\mathcal S \subseteq \mathcal R'$
and together with~$D_R$ dominates all of~$G[\bigcup_{1\leq j \leq i}
N[x_j]]$.

As the inductive base, we set~$T_0[\mathcal S] := \infty$ for all~$\emptyset \neq
\mathcal S \subseteq \mathcal R'$ and~$T_0[\emptyset] := 0$. Clearly, a non-empty
$\mathcal S$ cannot possibly be dominated (note that since we guessed $D_R$, no vertex
in $R'$ is contained in the dominating set). The cost of dominating nothing in $R'$ is of course $0$, hence
the choice of $T_0[\emptyset]$. We conclude that the tables of the base cases 
describe the claimed quantity.

Since the base cases hold, let us assume that the table $T_i[\mathcal S]$ for
all $\mathcal S \subseteq \mathcal R'$ describe the claimed partial solutions.
Recall that the recurrence is given by
\[
  T_{i+1}[\mathcal S] :=
    \min_{\mathcal S_1 \cup \mathcal S_2 = \mathcal S}
    \Big( T_i[\mathcal S_1] + \gamma\big(H_{i+1}, N(D_R) \cap N[x_{i+1}], \mathcal S_2^{-1}(N(x_{i+1}))  \big) \Big).
\]
Let $D_{i+1} \subseteq N[\{x_1,\ldots,x_{i+1}\}]$ be a minimal set which 
covers $\mathcal S$ and dominates all of $G[\bigcup_{1\leq j \leq i+1} N[x_j]]$. 
We argue in the following that $T_{i+1}[\mathcal S] = |D_{i+1}|$.

For the first direction, 
let $D_i := D_{i+1} \cap N[\{x_1,\ldots,x_i\}]$ be the dominators of $D_{i+1}$ 
that lie outside of $H_{i+1}$. Let then $\mathcal S_1 \subseteq \mathcal S$ 
contain those sets of $\mathcal S$ that are covered by vertices in $D_i$
and let $\mathcal S_2 \subseteq \mathcal S$ be those that are covered by
$D_{i+1}\setminus D_i$. Since $D_{i+1}$ covers all of $\mathcal S$, we know
that $\mathcal S_1 \cup \mathcal S_2 = \mathcal S$ and hence this pair $\mathcal S_1,
\mathcal S_2$ appears in the above recurrence. By induction, we know that
$T_i[\mathcal S_1] = |D_i|$. Since $D_{i+1}$ covers $\mathcal S_2$ and
dominates $H_{i+1} \setminus N(D_R)$, it follows that 
\[
  \gamma\big(H_{i+1}, N(D_R) \cap N[x_{i+1}], \mathcal S_2^{-1}(N(x_{i+1}) \big)
  \leq |D_{i+1} \setminus D_i|.
\]
We conclude that 
therefore $T_{i+1}[\mathcal S] \leq |D_i| +  |D_{i+1} \setminus D_i| = |D_{i+1}|$.

In the other direction, consider any pair $\mathcal S_1, \mathcal S_2$ for which the right
hand side of the recurrence is minimized. By induction, we know that there exists
a set $D_i$ of size $T_i[\mathcal S_1]$ which covers $\mathcal S_1$ and dominates,
together with $D_R$, all of $G[\bigcup_{1\leq j \leq i} N[x_j]]$. Let $D'_{i+1}$
be a minimal solution to the \textsc{Annotated Dominating Set} instance
$\big(H_{i+1}, N(D_R) \cap N[x_{i+1}], \mathcal S_2^{-1}(N(x_{i+1}))  \big)$. By
definition, $D'_{i+1}$ covers $\mathcal S_2$ and dominates all of
$H_{i+1} \setminus N(D_R)$. But then $D_i \cup D'_{i+1}$ is a subset of
$[\{x_1,\ldots,x_{i+1}\}]$ of size $T_{i+1}[\mathcal S]$ which covers $\mathcal S$
and, together with $D_R$, dominates all of $G[\bigcup_{1\leq j \leq {i+1}} N[x_j]]$.
This shows that $|D_{i+1}| \leq T_{i+1}[\mathcal S]$ and we conclude that 
$|D_{i+1}| = T_{i+1}[\mathcal S]$, as claimed.

\end{document}